\newenvironment{algo}[1]
{\begin{algorithm}[#1]%
    \small%
    \DontPrintSemicolon%
    \SetArgSty{texttsf}%
    \SetTitleSty{textsf}{}%
    \SetNlSty{textrm}{}{}%
    \SetKwInput{Inputs}{input}%
    \SetKwInput{Outputs}{output}%
    \SetKwData{Converged}{converged}%
    \SetKwComment{tcc}{[}{]}
}
{\end{algorithm}}
\newcommand{\pmat}[1]{{\renewcommand{\arraystretch}{1.1}%
   \begin{pmatrix}#1\end{pmatrix}}}  
\newcommand{\bmat}[1]{{\renewcommand{\arraystretch}{1.1}%
   \begin{bmatrix}#1\end{bmatrix}}}
\newcommand{\smat}[1]{{\renewcommand{\arraystretch}{1.1}%
   \left[\begin{smallmatrix}#1\end{smallmatrix}\right]}}
   \def\innerprod(#1,#2){\langle#1,#2\rangle} 
\newcommand{\Newcommand}[2]%
   {\ifx#1\undefined \newcommand{#1}{#2} \else \renewcommand{#1}{#2} \fi}
  \newcommand{\mod}[1]{|#1|}
  \renewcommand{\mod}[1]{|#1|}
\Newcommand{\Re} {\mathbb{R}}           
\providecommand{\diag} {\mathop{\mathrm{diag}}}
\providecommand{\rank} {\mathop{\mathrm{rank}}}
\providecommand{\CSMINRES} {{\small CS-MINRES}\xspace}
\providecommand{\CSMINRESQLP} {{\small CS-MINRES-QLP}\xspace}
\providecommand{\SHMINRES} {{\small SH-MINRES}\xspace}
\providecommand{\SSMINRES} {{\small SS-MINRES}\xspace}
\providecommand{\SSMINRESQLP} {{\small SS-MINRES-QLP}\xspace}
\providecommand{\SHMINRESQLP} {{\small SH-MINRES-QLP}\xspace}
\providecommand{\CG}        {{\small CG}\xspace}
\providecommand{\CGLS}      {{\small CGLS}\xspace}
\providecommand{\MINRESQLP} {{\small MINRES-QLP}\xspace}
\providecommand{\LSQR}      {{\small LSQR}\xspace}
\providecommand{\QMR} {{\small QMR}\xspace}
\providecommand{\BCG} {{\small BiCG}\xspace}
\newcommand{\SymOrtho}{\text{SymOrtho}\xspace}
\providecommand{\diag} {\mathop{\operator@font{diag}}}
\providecommand{\tril} {\mathop{\operator@font{tril}}}
\newcommand{\be}{\begin{enumerate}}
\newcommand{\ee}{\end{enumerate}}
\newcommand{\T}{^T\!}
\newcommand{\conj}[1]{\overline{#1}}
\newcommand{\Cpp}{C\raise3pt\hbox{\tiny++}}
\newcommand{\cond}{\mathrm{cond}}
\newcommand{\Null}{\mathop{\mathrm{null}}}
\newcommand{\range}{\mathop{\mathrm{range}}}
\newcommand{\Span}{\mbox{\rm span}}
\newcommand{\etal}{et al.}  
\newcommand{\inv}{^{-1}}
\newcommand{\normm}[1]{\biggl\|#1\biggr\|}
\newcommand{\norm}[1]{\|#1\|}
\newcommand{\spose}[1]{\hbox to 0pt{#1\hss}}
\providecommand{\text}[1]{\hbox{\quad#1\quad}}
\newcommand{\nthinsp}{\mskip -2   mu}
\Newcommand{\R}{_{\scriptscriptstyle R}}
\newcommand{\superstar}{^{\raise 0.5pt\hbox{$\nthinsp *$}}}
\newcommand{\SUPERSTAR}{^{\raise 0.5pt\hbox{$*$}}}
\newcommand{\lamstarT }{\lambda^{\raise 0.5pt\hbox{$\nthinsp *$}T}}
\newcommand{\rhat}{\skew3\widehat r}
\newcommand{\xhat}{\skew{2.8}\widehat x}
\providecommand{\Matlab}{{\sc Matlab}\xspace}
\providecommand{\GMRES}{{\small GMRES}\xspace}
\providecommand{\LSQR}{{\small LSQR}\xspace}
\providecommand{\MINRES}{{\small MINRES}\xspace}
\providecommand{\SYMMLQ}{{\small SYMMLQ}\xspace}
\providecommand{\BICGSTAB}{{\small BiCGstab}\xspace}
\providecommand{\LS}{{\small LS}\xspace} 
\providecommand{\SVD}{{\small SVD}\xspace}
\providecommand{\TSVD}{{\small TSVD}\xspace}
\providecommand{\QR}{{\small QR}\xspace}
\providecommand{\LQ}{{\small LQ}\xspace}
\providecommand{\QLP}{{\small QLP}\xspace}
\providecommand{\ULV}{{\small ULV}\xspace}
\newtheorem{example}{Example}[section]
\newtheorem{remark}{Remark}[section]
\newcommand{\e}[1]{\text{e}{#1}}
\newcommand{\myhalf}{\frac12}
\newcommand{\mystrut}{\rule[-1.9ex]{0pt}{5.3ex}}
\newcommand{\tablestrut}{\rule[-1ex]{0pt}{3.5ex}}
\newcommand{\underTj}{\underline{T_j}}
\newcommand{\underTk}{\underline{T_k}}
\newcommand{\underTkp}{\underline{T_{k+1}}}
\newcommand{\red}  [1]{\textcolor{black}      {\null #1}} 
\title{Minimal
Residual Methods for Complex Symmetric, Skew Symmetric, and Skew
Hermitian Systems\thanks{Draft CSMINRES25.tex of January 6, 2014. 
Argonne Tech. Rep. ANL/MCS-P3028-0812.
}
}
\author{Sou-Cheng T. Choi\thanks{Computation Institute, University of
    Chicago, Chicago, IL 60637 (sctchoi@uchicago.edu). This work was
    supported in part by the \red{U.S.\ Dept.~of Energy, Office of Science, Advanced
      Scientific Computing Research} under
    Contract DE-AC02-06CH11357 and by the National Science Foundation
    grant SES-0951576 through the Center for Robust Decision Making on
    Climate and Energy Policy.}}
\begin{document}

\maketitle

\begin{center}
\emph{Dedicated to Michael Saunders's 70th birthday}
\smallskip
\end{center}

\begin{abstract}
While there is no lack of efficient Krylov subspace solvers for
Hermitian systems, \red{few exist} for complex symmetric, skew
symmetric, or skew Hermitian systems, which are increasingly important
in modern applications including quantum dynamics, electromagnetics,
and power systems.
For a large\red{,} consistent\red{,} complex symmetric system, one may
apply a non-Hermitian Krylov subspace method disregarding the symmetry
of $A$, or a Hermitian Krylov solver on the equivalent normal equation
or an augmented system twice the original dimension. These have the
disadvantages of increasing memory, conditioning, or computational
costs.  An exception is a special version of QMR by Freund (1992), but
that may be affected by \red{nonbenign} breakdowns unless look-ahead
is implemented; furthermore, it is designed for only consistent and
nonsingular problems.  Greif and Varah (2009) adapted CG for
nonsingular skew symmetric linear systems that are necessarily and
restrictively of even order.

We extend the symmetric and Hermitian algorithms MINRES and MINRES-QLP
by Choi, Paige\red{,} and Saunders (2011) to complex symmetric, skew
symmetric, and skew Hermitian systems. In particular, MINRES-QLP uses
a rank-revealing QLP decomposition of the tridiagonal matrix from a
three-term recurrent complex symmetric Lanczos process. Whether the
systems are real or complex, singular or invertible, compatible or
inconsistent, MINRES-QLP computes the unique minimum-length
\red{(i.e., pseudoinverse)} solutions.  It is a significant extension
of MINRES by Paige and Saunders (1975) with enhanced stability and
capability.

\end{abstract}

\begin{keywords}
  MINRES, MINRES-QLP, Krylov subspace method, Lanczos process,
  conjugate-gradient method, minimum-residual method, singular
  least-squares problem, sparse matrix, complex symmetric, skew
  symmetric, skew Hermitian, preconditioner, structured matrices
\end{keywords}

\begin{AMS}
   15A06, 65F10, 65F20, 65F22, 65F25, 65F35, 65F50, 93E24
\end{AMS}

\begin{DOI}
   xxx/xxxxxxxxx
\end{DOI}

\section{Introduction}  \label{sec:intro}

Krylov subspace methods \red{for linear systems} are generally divided
into two classes: \red{those} for Hermitian matrices
(e.g.\red{,}\ \CG~\cite{HS52}, \MINRES~\cite{PS75},
\SYMMLQ~\cite{PS75}, \MINRES-\QLP~\cite{CPS11,CS12,CS12b,C06}) and
those for general matrices without such symmetries
(e.g.\red{,}\ \BCG~\cite{F76}, \GMRES~\cite{SS86}, \QMR~\cite{FN91},
\BICGSTAB~\cite{V92}, \LSQR~\cite{PS82a,PS82b}, \red{and
  IDR$(s)$~\cite{SV08})}.  Such a division is largely due to
historical reasons in numerical linear algebra---the most prevalent
structure for matrices arising from practical applications being
Hermitian (which reduces to symmetric for real
matrices). However\red{,} other types of symmetry structures, notably
complex symmetric, skew symmetric, and skew Hermitian matrices, are
becoming increasingly common in modern applications. Currently,
\red{except} possibly for storage and matrix-vector products, these
are treated \red{as} general matrices with no symmetry structures. The
algorithms in this article go substantially further in developing
specialized Krylov subspace algorithms designed at the outset to
exploit \red{the} symmetry structures. In addition, our algorithms
constructively reveal the (numerical) compatibility and singularity of
a given linear system\red{;} users do not have to know these
properties a priori.

We are concerned with iterative methods for solving a large linear system
$Ax=b$ or the more general minimum-length least-squares (\LS) problem
\begin{equation}  \label{eqn4b}
  \min \norm{x}_2 \quad \text{s.t.} \quad x \in 
  \arg\min_{x \in \mathbb{C}^{n}}\norm{Ax-b}_2,
\end{equation}
where $A \in \mathbb{C}^{n \times n}$ is complex symmetric ($A=A\T$\;)
or skew Hermitian ($A=-A^*$), and possibly singular\red{,} and $b \in
\mathbb{C}^{n}$.  Our results are directly applicable to problems with
symmetric or skew symmetric matrices $A=\pm A\T \in \mathbb{R}^{n
  \times n}$ and real vectors $b$.  $A$ may exist only as an operator
for returning the product $Ax$.

The solution of~\eqref{eqn4b}, called the \textit{minimum-length} or
\textit{pseudoinverse} solution~\cite{GV12}, is formally given by
$x^\dagger = (A^* A)^\dagger A^* b$,
where $A^\dagger$ denotes the pseudoinverse of $A$. The pseudoinverse
is continuous under perturbations $E$ for which
$\rank{(A+E)}=\rank{(A)}$~\cite{S69}, and $x^\dagger$ is continuous
under the same condition.  Problem~\eqref{eqn4b} is then
well-posed~\cite{Had1902}.

Let $A=U \Sigma U\T $ be a Takagi decomposition~\cite{HJ}, a
singular-value decomposition (\SVD) specialized for a complex
symmetric matrix, with $U$ unitary ($U^*U=I$) and $\Sigma \equiv
\diag(\smat{\sigma_1,\ldots,\sigma_n})$ real non-negative and
$\sigma_1 \ge \sigma_2 \ge \cdots \ge \sigma_r > 0$, where $r$ is the
rank of $A$.  We define the condition number of $A$ to be $\kappa(A) =
\smash[b]{\frac{\sigma_1}{\sigma_r}}$, and we say that $A$ is
ill-conditioned if $\kappa(A)\gg 1$.  Hence a mathematically
nonsingular matrix (e.g., $A=\smat{1 & 0 \\ 0 & \varepsilon}$, where
$\varepsilon$ is the machine precision) could be regarded as
numerically singular. Also, a singular matrix could be
well-conditioned or ill-conditioned. For a skew Hermitian matrix, we
use its (full) eigenvalue decomposition $A=V\Lambda V^*$, where
$\Lambda$ is a diagonal matrix of imaginary numbers (possibly zeros;
in conjugate pairs if $A$ is real, i.e., skew symmetric) and $V$ is
unitary.\footnote{Skew Hermitian (symmetric) matrices are, like
  Hermitian matrices, unitarily diagonalizable (i.e.,
  normal~\cite[Theorem~24.8]{TB}).} We define its condition number as
$\kappa(A) = \smash[b]{\frac{|\lambda_1|}{|\lambda_r|}}$, the ratio of
the largest and smallest nonzero eigenvalues in magnitude.

\bigskip
\begin{example} We contrast the five classes of symmetric 
or Hermitian matrices by their definitions and small instances of
order $n=2$:
\begin{align*} 
  \mathbb{R}^{n\times n} \ni A &=A\T =\bmat{1 & 5 \\ 5 & 1} \text{ is
    symmetric}.  
  \\ \mathbb{C}^{n\times n} \ni A &=A^* =\bmat{1 & 1-2i
    \\ 1+2i & 1} \text{ is Hermitian (with real diagonal)}.
  \\ \mathbb{C}^{n\times n} \ni A &=A\T =\bmat{2+i & 1-2i \\ 1-2i & i}
  \text{ is complex symmetric (with complex
    diagonal)}.  
  \\ \mathbb{R}^{n\times n} \ni A
  &=-A\T =\bmat{0 & 5 \\ -5 & 0} \text{ is skew symmetric (with zero
    diagonal)}.  
  \\ \mathbb{C}^{n\times n} \ni A &=-A^* =\bmat{0 &
    1-2i \\ -1-2i & \red{i}} \text{ is skew Hermitian (with \red{imaginary} diagonal)}.
\end{align*}
\end{example}

\CG, \SYMMLQ, and \MINRES are designed for solving nonsingular
symmetric systems $Ax = b$.  \CG is efficient on symmetric positive
definite systems. For indefinite problems, \SYMMLQ and \MINRES are
reliable even if $A$ is ill-conditioned.

Choi~\cite{C06} \red{appears} to be the first \red{to} comparatively
\red{analyze} the algorithms on singular symmetric and Hermitian
problems.  On (singular) incompatible problems \CG and \SYMMLQ
iterates $x_k$ diverge to some nullvectors of
$A$~\cite[Propositions~2.7, 2.8, and 2.15; Lemma 2.17]{C06}.  \MINRES
often seems more desirable to users because its residual norms are
monotonically decreasing.  On singular compatible systems, \MINRES
returns $x^\dagger$~\cite[Theorem~2.25]{C06}.  On singular
incompatible systems, \MINRES remains reliable if it is terminated
with a suitable stopping rule that monitors
$\norm{Ar_k}$~\cite[Lemma~3.3]{CPS11},
but the solution is generally not
$x^\dagger$~\cite[Theorem~3.2]{CPS11}. \MINRESQLP
\cite{CPS11,CS12,CS12b,C06} is a significant extension of \MINRES,
capable of computing $x^\dagger$, simultaneously minimizing residual
and solution norms.  The additional cost of \MINRESQLP is moderate
relative to \MINRES: $1$ vector in memory, $4$ axpy \red{operations}
($y \leftarrow \alpha x + y$), and $3$ vector \red{scalings} ($x
\leftarrow \alpha x$) per iteration.  The efficiency of \MINRES is
partially, and in some cases almost fully, retained in \MINRESQLP by
transferring from a \emph{\MINRES phase} to a \emph{\MINRESQLP phase}
only when an estimated $\kappa(A)$ exceeds a user-specified value.
The \MINRES phase is optional, consisting of only \MINRES iterations
for nonsingular and well-conditioned subproblems.  The \MINRESQLP
phase handles less well-conditioned and possibly numerically singular
subproblems.  In all iterations, \MINRESQLP uses \QR factors of the
tridiagonal matrix from a Lanczos process and then applies a second
\QR decomposition on the conjugate transpose of the upper-triangular
factor to obtain and reveal the rank of a lower-tridiagonal form.
On nonsingular systems, \MINRESQLP enhances the accuracy (with
\red{smaller} rounding errors) and stability of \MINRES.  It is
applicable to symmetric and Hermitian problems with no traditional
restrictions such as nonsingularity and definiteness of~$A$ or
compatibility of~$b$.

The aforementioned established Hermitian methods are not\red{,
  however,} directly applicable to complex or skew symmetric
equations.  For consistent complex symmetric problems, which could
arise in Helmholtz equations, linear systems that involve Hankel
matrices, or applications in quantum dynamics, electromagnetics, and
power systems, we may apply a non-Hermitian Krylov subspace method
disregarding the symmetry of $A$ or a Hermitian Krylov solver (such as
\CG, \SYMMLQ, \MINRES, or \MINRESQLP) on the equivalent normal
equation or an augmented system twice the original dimension. They
suffer increasing memory, conditioning, or computational costs.  An
exception\footnote{It is noteworthy that among direct methods for
  large sparse systems, MA57 and ME57~\cite{D09} are available for
  real\red{, Hermitian}, and complex symmetric problems.}
is a special version of \QMR by Freund (1992)~\cite{F92}, which takes
advantage of the matrix symmetry by using an unsymmetric Lanczos
framework.  Unfortunately, the algorithm may be affected by
\red{nonbenign} breakdowns unless a look-ahead strategy is
implemented.  Another less than elegant feature of \QMR is \red{that}
the vector norm of choice is induced by the inner product $x\T y$ but
it is not a proper vector norm (e.g., $0 \neq x\T :=\smat{ 1 & i }$,
where $i=\sqrt{-1}$, yet $x\T x = 0$).  Besides, \QMR is designed for
only nonsingular and consistent problems.  Inconsistent complex
symmetric problems~\eqref{eqn4b} could arise from shifted problems in
inverse or Rayleigh quotient iterations; mathematically or numerically
singular or inconsistent systems, in which $A$ or $b$ are vulnerable
to errors due to measurement, discretization, truncation, or
round-off. In fact, \QMR and most non-Hermitian Krylov solvers (other
than LSQR) fail to converge to $x^\dagger$ on an
example as simple as $A = i \; \diag(\smat{1 \\ 0})$ and $b = i
\smat{1 \\ 1}$, for which $x^\dagger =\smat{1 \\ 0}$.

Here we extend the symmetric and Hermitian algorithms \MINRES and
\MINRESQLP
The main aim is to deal reliably with compatible or incompatible
systems and to return the \emph{unique} solution
of~\eqref{eqn4b}. Like \QMR and the Hermitian Krylov solvers, \red{our
  approach} exploits the matrix symmetry.

Noting the similarities in the definitions of skew symmetric matrices
($A=-A\T \in \mathbb{R}^{n\times n}$\red{)} and complex symmetric matrices
and motivated by algebraic Riccati equations~\cite{I84} and more
recent, novel applications of Hodge theory in \red{data
  mining}~\cite{JLYY11,GL11}, we evolve \MINRESQLP \red{further} for
solving skew symmetric linear systems. Greif and Varah~\cite{GV09}
adapted CG for nonsingular skew symmetric linear systems that are
\emph{skew-$A$ conjugate}, meaning $-A^2$ is symmetric positive
definite. The algorithm is further restricted to $A$ of \red{even
  order because} a skew symmetric matrix of odd order is singular.
Our \MINRESQLP extension has no such limitations and \red{is}
applicable to singular problems. For skew Hermitian \red{systems} with
skew Hermitian matrices or operators ($A=-A^* \in \mathbb{C}^{n\times
  n}$), our approach is to transform them into Hermitian systems so
that they \red{can} immediately take advantage of the original
Hermitian version of \MINRESQLP.

\subsection{Notation}  \label{sec:notation}

For an incompatible system, $Ax\approx b$ is shorthand for the \LS
problem~(\ref{eqn4b}). We use ``$\simeq$'' to mean ``approximately
equal to\red{.''}  The letters $i$, $j$, $k$ in subscripts or
superscripts denote integer indices\red{; $i$ may} also represent
$\sqrt{-1}$\red{. We use} $c$ and $s$ \red{for} cosine and sine of
some angle $\theta$; $e_k$ \red{is} the $k$th unit vector; $e$
\red{is} a vector of all ones; and other lower-case letters such as
$b$, $u$, and $x$ (possibly with integer subscripts) denote
\textit{column} vectors.  Upper-case letters $A$, $T_k$, $V_k$,
\dots{} denote matrices, and $I_k$ is the identity matrix of order
$k$.  Lower-case Greek letters denote scalars; in particular,
$\varepsilon \simeq 10^{-16}$ denotes floating-point double precision.
If a quantity $\delta_k$ is modified one or more times, we denote its
values by $\delta_k$, $\delta_k^{(2)}$, \red{and so on}. We use
$\diag(v)$ to denote a diagonal matrix with elements of a vector $v$
on the diagonal. The transpose, conjugate, and conjugate transpose of
a matrix $A$ \red{are} denoted \red{by} $A\T$, $\conj{A}$, and
$A^*=\conj{A}\T$\red{,} respectively.  The symbol $\norm{\,\cdot\,}$
denotes the $2$-norm of a vector ($\norm{x}=\sqrt{x^*x}$) or a matrix
($\norm{A}=\sigma_1$ from $A$'s \SVD). 



\subsection{Overview}

In Section~\ref{sec:review} we briefly review the Lanczos processes
and QLP decomposition before developing the algorithms in Sections
\ref{sec:csminres-standalone}-\ref{sec:transfer}.  Preconditioned
algorithms are described in Section~\ref{sec:pcsminres}.  Numerical
experiments are described in Section~\ref{sec:numerical}. We conclude
with future work and related software in
Section~\ref{sec:conclusions}.  Our pseudocode and a summary of norm
estimates and stopping conditions are given in
Appendices~\ref{sec:pseudo} and~\ref{sec:QLPstop}.

\section{Review}  \label{sec:review} 

In the following few subsections, we summarize algebraic methods
necessary for our algorithmic development.

\subsection{\red{Saunders} and Lanczos processes}  \label{sec:Lanczos}

Given a complex symmetric operator $A$ and a vector $b$, a
Lanczos-\emph{like}\footnote{We distinguish our process from the
  complex symmetric Lanczos process~\cite{L56_88} used in
  QMR~\cite{F92}.}  process~\cite{BS99}, which we name the
\emph{Saunders process}, computes vectors $v_k$ and tridiagonal
matrices $\underline{T_k}$ according to $v_0 \equiv 0$, $\beta_1
v_1=b$, and then\footnote{Numerically, $p_k = A\conj{v}_k-\beta_k
  v_{k-1}$, $\alpha_k = v_k^* p_k$, $\beta_{k+1}v_{k+1} =
  p_k-\alpha_kv_k$ is slightly better~\cite{P76}.}
\begin{equation} \label{eq:savk}
    p_k = A\conj{v}_k, \qquad \alpha_k = v_k^* p_k, \qquad
    \beta_{k+1}v_{k+1} = p_k-\alpha_kv_k-\beta_kv_{k-1}
\end{equation}
for $k=1,2,\dots,\ell$, where we choose $\beta_k > 0$ to give
$\norm{v_k}=1$.  In matrix form,
\begin{equation}  \label{eq:avk}
  A \conj{V}\!_k \!=\! V_{k+1}\underTk,\quad
  \underTk \!\equiv\! \mbox{\footnotesize
    $\bmat{\alpha_{1} & \beta_{2}
        \\ \beta_{2}  & \alpha_{2} & \ddots
        \\            & \ddots & \ddots & \beta_k
        \\            &  & \beta_k & \alpha_k
        \\            &  &  & \beta_{k+1}}
    $}
    \!\equiv\!
    \bmat{T_k \\ \beta_{k+1}e_k^T},
    \quad V_k \!\equiv\! \bmat{v_1 & \!\cdots\! & v_k}.
\end{equation}
In exact arithmetic, the columns of $V_k$ are orthogonal\red{,} and the
process stops with $k = \ell$ and $\beta_{\ell+1}=0$ for some $\ell
\le n$, and then $A \conj{V}\!_\ell = V_\ell T_\ell$.  For derivation
purposes we assume that this happens, though in practice it is rare
unless $V_k$ is reorthogonalized for each $k$.
In any case, \eqref{eq:avk} holds to machine precision\red{,} and the
computed vectors satisfy $\norm{V_k}_1 \simeq 1$ (even if $k \gg n$).

If instead we are given a skew symmetric $A$, the following is a
Lanczos process~\cite[Algorithm~1]{GV09}\footnote{Another Lanczos
  process for skew symmetric~$A$ \red{using} a different measure to
  normalize $\beta_{k+1}$ was developed in~\cite{W78,SW93}.}  that
transforms $A$ to a series of expanding, skew symmetric tridiagonal
matrices $T_k$ and generates a set of orthogonal vectors in $V_k$ in
exact arithmetic:
\begin{equation}\label{eq:savk-ss}
    p_k = A v_k,  \qquad - \beta_{k+1}v_{k+1} = p_k -\beta_kv_{k-1}\red{,}
\end{equation}
\red{where $\beta_k > 0$ for $k < \ell$.} Its associated matrix form is 
\begin{equation}  \label{eq:avk-sh}
  A V_k \!=\! V_{k+1}\underTk,\quad
  \underTk \!\equiv\! \mbox{\footnotesize
    $\bmat{0          & \beta_{2}
        \\ - {\beta}_{2} & 0      & \ddots
        \\            & \ddots & \ddots & \beta_k
        \\            &  & - {\beta}_k & 0
        \\            &  &  & - {\beta}_{k+1}}
    $}
    \!\equiv\!
    \bmat{T_k \\  -\beta _{k+1}e_k^T}.
\end{equation}


If the skew symmetric process were  forced on a skew Hermitian
matrix, the resultant $V_k$ would \emph{not} be orthogonal. Instead,
we multiply $Ax \approx b$ by $i$ on both sides to yield a Hermitian
problem since $(iA)^* = \conj{i} A^* = i A$. This simple
transformation by a scalar multiplication\footnote{Multiplying  by $-i$
  works equally well\red{,} but without loss of generality, we use $i$.}
preserves the \red{conditioning since} $\kappa(A)=\kappa(iA)$ and
allows us to adapt the original Hermitian Lanczos process with $v_0
\equiv 0$, $\beta_1 v_1= i b$, followed by
\begin{equation} \label{eq:sh-avk}
    p_k =  iA v_k, \qquad \alpha_k = v_k^* p_k, \qquad
    \beta_{k+1}v_{k+1} = p_k-\alpha_kv_k-\beta_kv_{k-1}.
\end{equation}
Its matrix form is the same as~\eqref{eq:avk} except that the first
equation is $i A V_k = V_{k+1}\underTk$.

\subsection{Properties of the Lanczos processes} 
\label{sec:Lanproperties}

The following properties of the Lanczos processes are notable:
\begin{enumerate}
\item If~$A$ and~$b$ \red{are} real, then the Saunders
  process~\eqref{eq:savk} \red{for a complex symmetric system} reduces
  to the symmetric Lanczos process.
\item The complex and skew symmetric properties of~$A$ carry over
  to~$T_k$ by the Lanczos processes~\eqref{eq:savk}
  and~\eqref{eq:savk-ss}\red{,} respectively.  From the skew Hermitian
  process~\eqref{eq:sh-avk}, $T_k$ is symmetric.
\item The skew symmetric Lanczos process~\eqref{eq:savk-ss} is only
  two-term recurrent.
\item In~\eqref{eq:sh-avk}, there are two ways to form $p_k$: $p_k =
  (iA) v_k$ or $p_k = A(iv_k)$.  One may be cheaper than the other. If
  $A$ is dense, $iA$ takes $\mathcal{O}(n^2)$ scalar multiplications
  and storage. If $A$ is sparse or structured as in the case of
  Toeplitz, $iA$ just takes $\mathcal{O}(n)$ multiplications. In
  contrast, $iv_k$ takes $n\red{\wp}$ multiplications, where~$\red{\wp}$ is
  theoretically bounded by the number of distinct nonzero eigenvalues
  of~$A$\red{;} but in practice~$\red{\wp}$ could be an integer multiple
  of~$n$.
\item While the skew Hermitian Lanczos process~\eqref{eq:sh-avk} is
  applicable to a skew symmetric problem, it involves complex
  arithmetic and is thus computationally more costly than the skew
  symmetric Lanczos process with a real \red{vector} $b$.
\item If $A$ is changed to $A-\sigma I$ for some scalar shift
  $\sigma$, \red{then} $T_k$ becomes $T_k - \sigma I$\red{,} and $V_k$ is unaltered,
  showing that singular systems are commonplace.  Shifted problems
  appear in inverse iteration or Rayleigh quotient iteration. The \red{Saunders and}
  Lanczos \red{frameworks efficiently handle} shifted problems.
\item Shifted skew symmetric matrices are not skew
  symmetric. This notion also applies to the case of shifted skew
  Hermitian matrices. Nevertheless they arise often in Toeplitz
  problems~\cite{CJ91,CJ07}.
\item For the skew Lanczos processes, the $k$th Krylov subspace
  generated by $A$ and $b$ is defined to be $\mathcal{K}_k(A,b) =
  \range(V_k) = \Span \{b,Ab, \dots, A^{k-1}b\}$.  For the Saunders
  process, we have a \emph{modified} Krylov subspace~\cite{SSY88} that
  we call the \emph{Saunders subspace}, \red{$
 \mathcal{S}_k(A,b) \equiv  \mathcal{K}_{k_1}(A\conj{A},b) \oplus
  \mathcal{K}_{k_2}(A\conj{A},A\conj{b})$},
where \red{$\oplus$ is the direct-sum operator,} $k_1+k_2 = k$\red{,} and $0 \le k_1-k_2\le1 $.
 
\item \label{prop:fullrank} $\underTk$ has full column rank $k$ for
  all $k < \ell$ \red{because} $\beta_1,\dots,\beta_{k+1} > 0$.
\end{enumerate}

\smallskip
\begin{theorem} $T_\ell$ is nonsingular if and only if $b \in \range(A)$.  
Furthermore,
$\rank(T_\ell) = \ell-1$ in the case $b \notin \range(A)$.
\begin{proof}
  We prove below for $A$ complex symmetric. The proofs are similar
  for the skew symmetric and skew Hermitian cases.
  
  We use $A \overline{V}_\ell = V_\ell T_\ell$ twice.  First, if
  $T_\ell$ is nonsingular, we can solve $T_\ell y_\ell = \beta_1 e_1$
  and then $A \overline{V}_\ell y_\ell = V_\ell T_\ell y_\ell = V_\ell
  \beta_1 e_1 = b$.  Conversely, if $b \in \range(A)$, then
  $\range(\overline{V}_\ell) \subseteq \range(\overline{A}) =
  \range(A^*)$.  Suppose $T_\ell$ is singular. Then there exists $z
  \ne 0$ such that $T_\ell z=0$ and thus $V_\ell T_\ell z = A
  \overline{V}_\ell z=0$. That is, $0 \ne \overline{V}_\ell z \in
  \Null(A)$.  But this is impossible because $\overline{V}_\ell z \in
  \range(A^*)$ and $\Null(A) \cap \range(A^*) =\{ 0 \}$.  Thus
  $T_\ell$ must be nonsingular.

  If $b \notin \range(A)$, $T_\ell = \bmat{\underline{T_{\ell-1}}&
      \begin{smallmatrix}\beta_\ell e_{\ell-1}
                         \\ \alpha_\ell
      \end{smallmatrix}}$
is singular. It follows that $\ell > \rank(T_\ell) \ge
\rank(\underline{T_{\ell-1}}) = \ell-1$ since $\rank(\underline{T_k})
= k$ for all $k<\ell$. Therefore  $\rank(T_\ell) = \ell-1$. 
\end{proof}
\label{thm:rankTk}
\end{theorem}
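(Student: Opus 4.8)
The plan is to exploit the factorization $A \conj{V}_\ell = V_\ell T_\ell$ in both directions, together with the orthogonality of the Lanczos vectors (so that $V_\ell$ has full column rank and a left inverse). The key algebraic fact I would use repeatedly is that for a complex symmetric $A$ one has $\conj{A} = A^*$, so $\range(\conj{V}_\ell) \subseteq \range(\conj{A})$ whenever $\range(V_\ell) \subseteq \range(A)$; and the orthogonal-complement identity $\Null(A) \cap \range(A^*) = \{0\}$, which holds for any matrix. For the skew symmetric and skew Hermitian cases the argument is structurally identical (in the skew Hermitian case one works with $iA$, which is Hermitian, so $\range(\conj{V}_\ell) \subseteq \range(A^*)$ still holds), so it suffices to treat the complex symmetric case and remark on the rest.

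First I would prove the ``if'' direction of the equivalence. If $T_\ell$ is nonsingular, solve $T_\ell y_\ell = \beta_1 e_1$; then $A \conj{V}_\ell y_\ell = V_\ell T_\ell y_\ell = \beta_1 V_\ell e_1 = \beta_1 v_1 = b$, so $b \in \range(A)$. Conversely, suppose $b \in \range(A)$. Since each $v_j$ is (in exact arithmetic) built from $b, Ab, \dots$ via the Lanczos recurrence, $\range(V_\ell) \subseteq \range(A)$; taking conjugates, $\range(\conj{V}_\ell) \subseteq \range(\conj{A}) = \range(A^*)$. Now argue by contradiction: if $T_\ell$ were singular, pick $z \ne 0$ with $T_\ell z = 0$; then $A \conj{V}_\ell z = V_\ell T_\ell z = 0$, and $\conj{V}_\ell z \ne 0$ because $V_\ell$ (hence $\conj{V}_\ell$) has orthonormal columns. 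So $\conj{V}_\ell z$ is a nonzero element of $\Null(A)$ that also lies in $\range(A^*)$, contradicting $\Null(A) \cap \range(A^*) = \{0\}$. Hence $T_\ell$ is nonsingular.

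For the rank statement when $b \notin \range(A)$: the equivalence just proved forces $T_\ell$ to be singular, so $\rank(T_\ell) \le \ell - 1$. On the other hand, property~\ref{prop:fullrank} gives $\rank(\underline{T_{\ell-1}}) = \ell - 1$, and $\underline{T_{\ell-1}}$ is obtained from $T_\ell$ by deleting its last column (writing $T_\ell = \bmat{\underline{T_{\ell-1}} & \smat{\beta_\ell e_{\ell-1} \\ \alpha_\ell}}$), so $\rank(T_\ell) \ge \rank(\underline{T_{\ell-1}}) = \ell - 1$. Combining the two bounds gives $\rank(T_\ell) = \ell - 1$.

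The only genuinely delicate point is the containment $\range(\conj{V}_\ell) \subseteq \range(A^*)$ in the converse direction — one must be careful that it is $\conj{V}_\ell$ (not $V_\ell$) that appears multiplied by $A$ in the Saunders recurrence, and that for complex symmetric $A$ the conjugate $\conj{A}$ coincides with $A^*$ so that the standard orthogonality $\Null(A)\cap\range(A^*)=\{0\}$ applies. Everything else is a short dimension count relying on the full-column-rank facts already established for $V_\ell$ and for the $\underline{T_k}$.
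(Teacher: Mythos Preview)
Your proposal is correct and follows essentially the same approach as the paper: both directions of the equivalence use $A\conj{V}_\ell = V_\ell T_\ell$ in exactly the way you describe, the converse hinges on the same containment $\range(\conj{V}_\ell)\subseteq\range(A^*)$ combined with $\Null(A)\cap\range(A^*)=\{0\}$, and the rank statement is the same sandwich $\ell-1=\rank(\underline{T_{\ell-1}})\le\rank(T_\ell)<\ell$. One small wording point: your phrase ``built from $b, Ab,\dots$'' is slightly loose for the Saunders process (which uses $A\conj{v}_k$, not $Av_k$), but your actual inductive reasoning---that $v_1\in\range(A)$ and each $v_{k+1}$ is a combination of $A\conj{v}_k$ and earlier $v_j$'s, all in $\range(A)$---is sound and is exactly what the paper relies on implicitly.
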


\subsection{QLP decompositions for singular matrices}  
  \label{sec:QLPreview}

Here we generalize, from real to complex, the matrix decomposition
\emph{pivoted QLP} by Stewart in 1999~\cite{S99}.\footnote{QLP is a
  special case of the \ULV decomposition, also by
  Stewart~\cite{S93,HC03}.}  It is equivalent to two consecutive \QR
factorizations with column interchanges, first on $A$, then on
\red{$\smat{R & S}^*$}:
\begin{equation}  \label{qlpeqn1}
   Q_R A \Pi_R = \bmat{ R & S \\ 0&0 }, \qquad
   Q_L \bmat{R^* & 0 \\ S^* & 0} \Pi_L = \bmat{\hat{R} & 0 \\ 0&0},
\end{equation}
giving \emph{nonnegative} diagonal elements, where $\Pi_R$ and $\Pi_L$
are (real) permutations chosen to maximize the next diagonal element
of $R$ and $\hat{R}$ at each stage.  This gives
\begin{equation*}
A = QLP, \qquad
  Q = Q_R^* \Pi_L, \qquad
  L = \bmat{\hat{R}^* & 0 \\ 0&0}, \qquad
  P =  Q_L \Pi_R^T,
\label{qlpeqn2}
\end{equation*}
with $Q$ and $P$ orthonormal.  Stewart demonstrated that the diagonal
elements of $L$ (the \emph{$L$-values}) give better singular-value
estimates than those of $R$ (the \emph{$R$-values}), and the accuracy
is particularly good for the extreme singular values $\sigma_1$ and
$\sigma_n$:
\begin{equation}  \label{qlpeqn1a}
  R_{ii} \simeq \sigma_i, \quad L_{ii} \simeq \sigma_i, \quad
  \sigma_1 \ge \max_i L_{ii} \ge \max_i R_{ii}, \quad \min_i R_{ii}
  \ge \min_i L_{ii} \ge \sigma_n.\!\!
\end{equation}
The first permutation $\Pi_R$ in pivoted \QLP is important.  The main
purpose of the second permutation $\Pi_L$ is to ensure that the
$L$-values present themselves in decreasing order, which is not always
necessary.  If $\Pi_R = \Pi_L = I$, it is simply called the \emph{\QLP
  decomposition}, which is applied to each $T_k$ from the Lanczos
processes (Section~\ref{sec:Lanczos}) in \MINRESQLP.

\subsection{Householder \red{reflectors}}  
\label{sec:reflector}

  Givens rotations are often used to selectively annihilate  matrix
  elements.  Householder reflectors~\cite{TB} of the following form
  may be considered the \emph{Hermitian} counterpart of Givens
  rotations:
\begin{equation*}
 Q_{i,j} \!\equiv\! \mbox{\footnotesize
    $\bmat{1       & \cdots & 0       &  \cdots & 0          &  \cdots & 0 \;
        \\ \vdots  & \ddots & \vdots  &         & \vdots     &         & \vdots
        \\   0     & \cdots &  c      & \cdots  & \phantom-s & \cdots  & 0
        \\ \vdots  &        & \vdots  &  \ddots & \vdots     &         & \vdots
        \\    0    & \cdots & \conj{s}& \cdots  & -c         & \cdots  & 0
        \\ \vdots  &        & \vdots  &         & \vdots     &  \ddots & \vdots
        \\    0    & \cdots & 0       & \cdots  & 0          & \cdots  & 1
        }
    $},
\end{equation*}
where the subscripts indicate the positions of $c=\cos(\theta) \in
\mathbb{R}$ and $s=\sin(\theta) \in \mathbb{C}$ for some angle
$\theta$. They are orthogonal\red{,} and $Q_{i,j}^2 = I$ as \red{for} any
reflector, meaning $Q_{i,j}$ is its own inverse. Thus $c^2
+|s|^2=1$. We often use the shorthand $Q_{i,j} = \smat{c & \phantom-s
  \\[.8ex] \conj{s} & -c}$.

\bigskip

In the next few sections we extend \MINRES and
\MINRESQLP to solving complex symmetric problems~\eqref{eqn4b}. Thus
we tag the algorithms with ``CS-''.  The discussion and results can be
easily adapted to the skew symmetric and skew Hermitian cases\red{,} and so
we do not go into details.  In fact, the skew Hermitian problems can
be solved by the implementations~\cite{MinresqlpMatlab, MinresqlpF90}
of \MINRES and \MINRESQLP for Hermitian problems.  For example, we can
call the \red{\Matlab} solvers by \texttt{x = minres(i * A, i * b)} and
\texttt{x = minresqlp(i * A, i * b)} \red{to achieve} code reuse immediately.

\section{\CSMINRES standalone}  \label{sec:csminres-standalone}
\CSMINRES is a natural way of using the complex symmetric Lanczos
process~\eqref{eq:savk}
to solve~\eqref{eqn4b}.  For $k < \ell$, if $x_k =
\conj{V}\!_ky_k$ for some vector $y_k$, the associated residual is
\begin{equation}  \label{eqn:rk}
   r_k \equiv b-Ax_k = b - A \conj{V}\!_k y_k = \beta_1 v_1 - V_{k+1}
   \underTk y_k = V_{k+1} (\beta_1 e_1 - \underTk y_k).
\end{equation}
\red{In order to} make $r_k$ small,  $\beta_1 e_1 - \underTk y_k$
should be small.  At this iteration $k$, \CSMINRES minimizes the
residual subject to $x_k\in\range(\overline{V}_k)$ by choosing
\begin{equation}  \label{eqn:LSsubprob}
  y_k = \arg\min_{y \in \mathbb{C}^k}
        \norm{\underTk y - \beta_1 e_1}.
\end{equation}
By Theorem~\ref{thm:rankTk}, $\underTk$ has full column rank\red{,} and the
above is a nonsingular problem.

\subsection{QR factorization of $\underTk$}
\label{sec:qrfac}

We apply an expanding \QR factorization to the
subproblem~\eqref{eqn:LSsubprob} by $Q_0 \equiv 1$ and
\begin{equation}  \label{QRfac}
  Q_{k,k+1}\!\equiv\!
   \bmat{ 
              c_k &   \!\!\phantom-s_k
        \\    \conj{s}_k &   \!-c_k
        },
                 \quad
  Q_k \!\equiv\! Q_{k,k+1}
                 \bmat{Q_{k-1} \\ & \!1}\!,
                 \quad
                 Q_k \bmat{\underTk & \beta_1 e_1}
      \!=\!      \bmat{ R_k & t_k \\ 0 & \phi_{k}}\!,
\end{equation}
where $c_k$ and $s_k$ form the Householder reflector $Q_{k,k+1}$ that
annihilates $\beta_{k+1}$ in $\underTk$ to give upper-tridiagonal
$R_k$, with $R_k$ and $t_k$ being unaltered in later iterations.  We
can \red{state} the last expression in~\eqref{QRfac} in terms of its
elements for further analysis:
\begin{equation}  \label{QRfac2}
  \bmat{\,R_k\, \\ 0} \equiv \mbox{\small
$\bmat{
      \gamma_1 & \delta_2 & \epsilon_3
\\                   & \gamma_2^{(2)} & \delta_3^{(2)} & \ddots
\\                   &                & \ddots         & \ddots & \epsilon_k
\\                   &                &                & \ddots & \delta_k^{(2)}
\\                   &                &                &        & \gamma_k^{(2)}
\\                   &                &                &        & 0}$},
  \quad
  \bmat{t_k \\ \phi_{k}} \equiv
  \bmat{\tau_1 \\ \tau_2 \\ \vdots\\ \vdots\\ \tau_k \\ \phi_{k}}
  = \beta_1
  \bmat{c_1 \\ \conj{s}_1 c_2 \\ \vdots \\ \vdots
     \\ \conj{s}_1 \cdots \conj{s}_{k-1}c_k
     \\ \conj{s}_1 \cdots \conj{s}_{k-1}\conj{s}_k}
\end{equation}
(where the superscripts are defined in Section~\ref{sec:notation}).
With $\phi_1 \equiv \beta_1>0$, the full action of $Q_{k,k+1}$ in
\eqref{QRfac}, including its effect on later columns of $T_i$, $k < i
\le \ell$, is described~by
\begin{equation}  \label{min7}
  \bmat{          c_k & \!\!\!\phantom-s_k
        \\ \conj{s}_k & \!\!-c_k}
  \bmat{\begin{matrix}
           \gamma_k & \delta_{k+1} & 0
        \\ \beta_{k+1}    & \alpha_{k+1}       & \beta_{k+2}
        \end{matrix}
        & \biggm| &
        \begin{matrix} \phi_{k-1} \\ 0 \end{matrix}
       }
=
  \bmat{\begin{matrix}
           \gamma_k^{(2)} & \delta_{k+1}^{(2)} & \epsilon_{k+2}
        \\ 0              & \gamma_{k+1} & \delta_{k+2}
        \end{matrix}
        & \biggm| &
        \begin{matrix} \tau_k \\ \phi_{k} \end{matrix}
       }.
\end{equation}
Thus for each $j \le k < \ell$ we have $s_j \gamma_j^{(2)} =
\beta_{j+1} > 0$, giving $\gamma_1$, $\gamma_j^{(2)} \ne 0$, and
therefore each $R_j$ is nonsingular. Also, $\tau_k = \phi_{k-1} c_k$
and $\phi_{k} = \phi_{k-1} \conj{s}_k \ne 0$.  Hence
from~\eqref{eqn:rk}--\eqref{QRfac}, we \red{obtain} the following short
recurrence relation for the residual norm:
\begin{align}
   \label{eq:normrk}
   \norm{r_k} = \norm{\underTk y_k - \beta_1 e_1} = |\phi_k|
   \quad\Rightarrow\quad \norm{r_k} = \norm{r_{k-1}} |\conj{s}_k| 
   = \norm{r_{k-1}} |s_k|,
\end{align}
which is monotonically decreasing and tending to zero if $Ax = b$ is
compatible.

\subsection{Solving the subproblem}

When $k < \ell$, 
a solution of~\eqref{eqn:LSsubprob} satisfies $R_k y_k = t_k$.
Instead of solving for $y_k$, \CSMINRES solves $R_k\T D_k\T = V_k^*$
by forward substitution, obtaining the last column $d_k$ of $D_k$ at
iteration $k$. This basis generation process can be summarized as
\begin{align}
   \left\{
     \begin{array}{l}
       d_1 = \conj{v}_1/\gamma_1,\quad d_2 =
       (\conj{v}_2-\delta_2d_1)/\gamma_2^{(2)}, \\[1ex] d_k =
       ({\conj{v}_k - \delta_k^{(2)} d_{k-1} - \epsilon_k d_{k-2}}) /
       {\gamma_k^{(2)}}.
     \end{array}
   \right.
   \label{eq:rdeqv}
\end{align}
At the same time, \CSMINRES updates $x_k$ via $x_0 \equiv 0$ and
\begin{equation}  \label{minresxk}
    x_k = \conj{V}\!_k y_k = D_k R_k y_k = D_k t_k
  = x_{k-1} + \tau_k  d_k.
\end{equation}

\subsection{Termination}

When $k=\ell$, we can form $T_\ell$\red{,} but nothing else expands.  In
place of~\eqref{eqn:rk} and~\eqref{QRfac} we have \( r_\ell = V_\ell
(\beta_1 e_1 - T_\ell y_\ell) \) and \( Q_{\ell-1} \bmat{T_\ell &
  \beta_1 e_1} = \bmat{R_\ell & t_\ell} \). It is natural to solve for
$y_\ell$ in the subproblem
\begin{align}  \label{eqn:LSsubprob-ell}
    \min_{y_\ell \in \mathbb{C}^{\ell}} 
    \norm{T_\ell y_\ell - \beta_1 e_1}
    \quad \equiv \quad
    \min_{y_\ell \in \mathbb{C}^{\ell}} 
    \norm{R_\ell y_\ell - t_\ell}.
\end{align}
\red{Two} cases \red{must be considered}:
\begin{enumerate}
\item If $T_\ell$ is nonsingular, $R_\ell y_\ell = t_\ell$ has a
  unique solution.  Since $A\conj{V}\!_\ell y_\ell = V_\ell T_\ell
  y_\ell = b$, the problem $Ax=b$ is compatible and solved by $x_\ell
  = \conj{V}\!_\ell y_\ell $ with residual $r_\ell=0$.
  Theorem~\ref{theorem-singular-compatible}   proves that
  $x_\ell = x^\dagger$, assuring us that \CSMINRES is a useful solver 
for compatible linear systems even if $A$ is singular. 

\item If $T_\ell$ is singular, $A$ and $R_\ell$ are singular
  ($R_{\ell\ell}=0$)\red{,} and both $Ax = b$ and $R_\ell y_\ell = t_\ell$
  are incompatible. The optimal residual vector is unique, but
  infinitely many solutions give that residual. \CSMINRES sets the
  last element of $y_\ell$ to be zero.  The final point and residual
  stay as $x_{\ell-1}$ and $r_{\ell-1}$ with $\norm{r_{\ell-1}} =
  |\phi_{\ell-1}| = \beta_1 |s_1 |\cdots| s_{\ell-1}| > 0$.
  Theorem~\ref{theorem-singular-incompatible}   proves that
  $x_{\ell-1}$ is a \LS solution of $Ax \approx b$ (but not
  necessarily $x^\dagger$).
\end{enumerate}

\smallskip
\begin{theorem}  \label{theorem-singular-compatible}
  If $b \in \range(A)$, the final \CSMINRES point $x_\ell=x^\dagger$
  and $r_\ell=0$.
\end{theorem}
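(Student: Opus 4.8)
The plan is to use Theorem~\ref{thm:rankTk} to place ourselves in Case~1 of the termination analysis, which gives $r_\ell=0$ at once, and then to establish the minimum-length property by showing that every Saunders vector $v_k$ lies in $\range(A)$, so that the iterate $x_\ell = \conj{V}\!_\ell y_\ell$ is orthogonal to $\Null(A)$.

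First I would note that since $b\in\range(A)$, Theorem~\ref{thm:rankTk} says $T_\ell$ is nonsingular, so we are in Case~1 of Section~\ref{sec:csminres-standalone}: $T_\ell y_\ell=\beta_1 e_1$ has a unique solution $y_\ell$, and from the terminal form $A\conj{V}\!_\ell = V_\ell T_\ell$ of~\eqref{eq:avk} we get $A x_\ell = A\conj{V}\!_\ell y_\ell = V_\ell T_\ell y_\ell = \beta_1 V_\ell e_1 = b$, hence $r_\ell = b - A x_\ell = 0$. What remains is to prove $x_\ell = x^\dagger$.

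Next, recall that $x^\dagger$ is the unique solution of $Ax=b$ lying in $\Null(A)^\perp$, and that $\Null(A)^\perp = \range(A^*)$. Because $A=A\T$ is complex symmetric, $A^* = \conj{A}$, so it suffices to show $x_\ell\in\range(\conj{A})$. Since $x_\ell = \conj{V}\!_\ell y_\ell = \sum_{j=1}^{\ell}(y_\ell)_j\,\conj{v}_j$ and $\conj{v}_j\in\range(\conj{A})$ precisely when $v_j\in\range(A)$, the task reduces to showing $v_j\in\range(A)$ for all $j\le\ell$. I would prove this by induction using the recurrence~\eqref{eq:savk}: the base case is $v_1 = b/\beta_1\in\range(A)$ together with $v_0=0$; for the step, $\beta_{j+1}v_{j+1} = A\conj{v}_j - \alpha_j v_j - \beta_j v_{j-1}$, where $A\conj{v}_j\in\range(A)$ trivially and $v_j,v_{j-1}\in\range(A)$ by hypothesis, so the right-hand side lies in $\range(A)$; dividing by $\beta_{j+1}>0$ (valid for $j<\ell$; cf.\ property~\ref{prop:fullrank}) gives $v_{j+1}\in\range(A)$.

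Finally I would conclude: each $\conj{v}_j\in\range(\conj{A})=\Null(A)^\perp$, so $x_\ell$, being a combination of these vectors, satisfies $x_\ell\perp\Null(A)$; together with $Ax_\ell=b$, the identity $\norm{x_\ell+z}^2 = \norm{x_\ell}^2 + \norm{z}^2$ for $z\in\Null(A)$ shows $x_\ell$ is the unique minimum-length solution, i.e.\ $x_\ell=x^\dagger$. I expect the only real obstacle to be keeping the conjugations straight: one must notice that the \CSMINRES iterate is $\conj{V}\!_\ell y_\ell$ rather than $V_\ell y_\ell$, so that it lands in $\range(\conj{A})$, which is exactly $\Null(A)^\perp$ for complex symmetric $A$. (The skew symmetric and skew Hermitian cases follow from the analogous identifications of $\Null(A)^\perp$ with $\range(A^*)$.)
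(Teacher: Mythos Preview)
Your proof is correct, and the overall structure matches the paper's: show $Ax_\ell=b$ from nonsingularity of $T_\ell$, then show $x_\ell\in\range(A^*)=\range(\conj{A})=\Null(A)^\perp$, and finish with the Pythagorean identity. The one genuine difference is in how you reach $x_\ell\in\range(\conj{A})$. You argue by induction on the Saunders recurrence that each $v_j\in\range(A)$ (using $b\in\range(A)$ for the base case), hence each $\conj{v}_j\in\range(\conj{A})$. The paper instead takes the conjugate of the terminal relation $A\conj{V}\!_\ell=V_\ell T_\ell$ to get $\conj{A}V_\ell=\conj{V}\!_\ell\conj{T}\!_\ell$, and then uses the invertibility of $T_\ell$ to write $x_\ell=\conj{V}\!_\ell y_\ell=\conj{A}\bigl(V_\ell\conj{T}\!_\ell^{-1}y_\ell\bigr)$ in one line. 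The paper's route is shorter and exploits the nonsingularity of $T_\ell$ a second time; your inductive route is more elementary and actually yields the slightly stronger intermediate fact $\range(V_k)\subseteq\range(A)$ for every $k$, independent of whether $T_\ell$ is invertible.
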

\begin{proof}
If $b \in \range(A)$, the Lanczos process gives $A\conj{V}\!_\ell =
V_\ell T_\ell$ with nonsingular $T_\ell$, and \CSMINRES terminates
with $Ax_\ell=b$ and $x_\ell = \conj{V}\!_\ell y_\ell = A^*
q=\conj{A}q$, where $q = V_\ell \conj{T}\!_\ell\inv y_\ell$.  If some
other point $\xhat$ satisfies $A\xhat=b$, let $p = \xhat - x_\ell$.
We have $Ap=0$ and $x_\ell^* p = q^* Ap = 0$.  Hence $\norm{\xhat}^2 =
\norm{x_\ell + p}^2 = \norm{x_\ell}^2 + 2 x_\ell^* p + \norm{p}^2 \ge
\norm{x_\ell}^2$. \red{Thus  $x_\ell=x^\dagger$.} 
\red{Since $\beta_{\ell+1}=0$, $s_k=0$ in~\eqref{min7}. By~\eqref{eq:normrk}, $\norm{r_\ell}=0$ and $r_\ell = b-Ax_\ell = 0$}.
\end{proof}



\smallskip
\begin{theorem}   \label{theorem-singular-incompatible}
  If $b \notin \range(A)$, then $\norm{Ar_{\ell-1}} = 0$\red{,} and the
  \CSMINRES $x_{\ell-1}$ is an \LS solution.
\end{theorem}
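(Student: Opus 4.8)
The plan is to follow the pattern of the proof of Theorem~\ref{theorem-singular-compatible}: compute the least-squares optimality residual $A^*r_{\ell-1}$ directly from the Lanczos identity $A\conj V_\ell = V_\ell T_\ell$ and the \QR\ factorization, rather than arguing through the Saunders subspace. Since $b\notin\range(A)$, Theorem~\ref{thm:rankTk} gives $\rank(T_\ell)=\ell-1$; as $Q_{\ell-1}$ is unitary, $R_\ell$ is singular, and since its leading $\ell-1\times\ell-1$ block is the nonsingular triangular matrix $R_{\ell-1}$ (Section~\ref{sec:qrfac}), the last row of the upper-triangular $R_\ell$ vanishes (so $R_{\ell\ell}=0$). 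Thus \CSMINRES\ keeps $x_{\ell-1}=\conj V_\ell\tilde y_\ell$ and $r_{\ell-1}=b-Ax_{\ell-1}$, where $\tilde y_\ell\equiv\smat{y_{\ell-1}\\0}\in\mathbb C^\ell$, $R_{\ell-1}y_{\ell-1}=t_{\ell-1}$, and $\norm{r_{\ell-1}}=|\phi_{\ell-1}|>0$.

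First I would unwind the \QR\ step to get a closed form for $r_{\ell-1}$. Using $Q_{\ell-1}T_\ell=R_\ell$, $Q_{\ell-1}\beta_1 e_1=\smat{t_{\ell-1}\\\phi_{\ell-1}}$ (from iteration $\ell-1$), and $R_\ell\tilde y_\ell=\smat{R_{\ell-1}y_{\ell-1}\\0}=\smat{t_{\ell-1}\\0}$, one finds $Q_{\ell-1}(\beta_1 e_1-T_\ell\tilde y_\ell)=\phi_{\ell-1}e_\ell$, hence $r_{\ell-1}=V_\ell(\beta_1 e_1-T_\ell\tilde y_\ell)=\phi_{\ell-1}V_\ell Q_{\ell-1}^*e_\ell$. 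Set $w\equiv Q_{\ell-1}^{T}e_\ell$; since the last row of $R_\ell=Q_{\ell-1}T_\ell$ is zero, $w^{T}T_\ell=0$, and therefore $T_\ell w=0$ because $T_\ell$ is complex symmetric. Conjugating $A\conj V_\ell=V_\ell T_\ell$ and using $A^*=\conj A$ (as $A=A\T$) gives $A^*V_\ell=\conj V_\ell\,\conj T_\ell$, so that $A^*r_{\ell-1}=\phi_{\ell-1}\conj V_\ell\,\conj T_\ell\,\conj w=\phi_{\ell-1}\conj V_\ell\,\conj{T_\ell w}=0$. Hence the optimality residual $A^*r_{\ell-1}$ vanishes (this is $Ar_{\ell-1}$ when $A$ is real and symmetric), and since $A^*(b-Ax_{\ell-1})=0$ is exactly the normal equations, $x_{\ell-1}$ is an \LS\ solution of $Ax\approx b$; it need not be $x^\dagger$ because the minimizers form an affine set parallel to $\Null(A)$.

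The main obstacle is the bookkeeping in the first step: one must verify carefully that the only component of $\beta_1 e_1-T_\ell\tilde y_\ell$ left by $Q_{\ell-1}$ is $\phi_{\ell-1}e_\ell$, which uses the \CSMINRES\ solve $R_{\ell-1}y_{\ell-1}=t_{\ell-1}$ at step $\ell-1$ together with the banded upper-triangular form of $R_\ell$ and its zero last row. A second point that needs care is the identity $A^*V_\ell=\conj V_\ell\,\conj T_\ell$: keeping the transpose and the conjugate transpose straight is essential, since this is exactly where the argument would break for a non-symmetric matrix and where the conjugates built into the Saunders recurrence pay off. An alternative, slightly longer route proves the \LS\ claim first by splitting $\beta_1 e_1=T_\ell y_0+c\,\conj z$ with $z$ spanning $\Null(T_\ell)$ and $c\neq0$ (possible because $\beta_1 e_1\notin\range(T_\ell)$, else $b\in\range(A)$), translating to $b=A(\conj V_\ell y_0)+c\,V_\ell\conj z$, and checking $V_\ell\conj z\in\Null(A^*)=\range(A)^{\perp}$; then $\conj V_\ell y_0\in\range(\conj V_\ell)$ is a genuine \LS\ solution, and by minimality of $|\phi_{\ell-1}|$ over that subspace so is $x_{\ell-1}$.
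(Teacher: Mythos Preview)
Your proof is correct and follows essentially the same approach as the paper: both establish $R_{\ell\ell}=\gamma_\ell=0$ from the singularity of $T_\ell$ and then show $A^*r_{\ell-1}=\conj{A}r_{\ell-1}=0$ via the Lanczos identity $A\conj V_\ell=V_\ell T_\ell$ and the \QR\ structure. The paper defers the computation of $\conj{A}r_{\ell-1}$ to Lemma~\ref{minreslemma2}, which tracks the explicit entries $\gamma_{k+1},\delta_{k+2}$ from~\eqref{min7} and then specializes to $k=\ell-1$; your argument is more self-contained and slightly more structural, using the complex symmetry of $T_\ell$ to convert the vanishing last row of $R_\ell$ (i.e., $w^{T}T_\ell=0$ with $w=Q_{\ell-1}^{T}e_\ell$) directly into a nullvector $T_\ell w=0$, which is a clean way to avoid bookkeeping individual entries.
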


\begin{proof}
Since $b \notin \range(A)$, $T_\ell$ is singular and $R_{\ell\ell} =
\gamma_\ell = 0$.  By Lemma~\ref{minreslemma2},
$A^*(Ax_{\ell-1}-b) = -\conj{A}r_{\ell-1} = -\norm{r_{\ell-1}} \gamma_\ell
v_\ell = 0$.  Thus $x_{\ell-1}$ is an \LS solution.
\end{proof}

\section{\CSMINRESQLP standalone}  \label{sec:CSMINRESQLP}

In this section we develop \CSMINRESQLP for solving ill-conditioned or
singular symmetric systems.  The Lanczos framework is the same as in
\CSMINRES, and QR factorization is applied to $\underTk$ in
subproblem~\eqref{eqn:LSsubprob} for all $k < \ell$; see
Section~\ref{sec:qrfac}. By Theorem~\ref{thm:rankTk} and
Property~\ref{prop:fullrank} in Section~\ref{sec:Lanproperties},
$\rank(\underTk) = k$ for all $k<\ell$ and $\rank(T_\ell) \ge \ell-1$.
\CSMINRESQLP handles $T_\ell$ in~\eqref{eqn:LSsubprob-ell} with extra
care to \emph{constructively} reveal $\rank(T_\ell)$
 via a \QLP decomposition, so it can compute the minimum-length
 solution of the following subproblem~instead
 of~\eqref{eqn:LSsubprob-ell}:
\begin{align}  \label{eqn:LSsubprob-ell-2}
   & \min \norm{y_\ell}_2 \quad \text{s.t.} \quad y_\ell \in 
  \arg\min_{y_\ell \in \mathbb{C}^{\ell}} 
  \norm{T_\ell y_\ell - \beta_1 e_1}.
\end{align}
Thus \CSMINRESQLP also applies the \QLP decomposition on $\underTk$
in~\eqref{eqn:LSsubprob} for all $k < \ell$.

\subsection{QLP factorization of $\underTk$}

In \CSMINRESQLP, the \QR factorization~\eqref{QRfac} of $\underTk$ is
followed by an \LQ factorization of $R_k$:
\begin{equation}  \label{qlpeqn3a}
  Q_k \underTk =  \bmat{R_k \\ 0}, \qquad
  R_k P_k = L_k,  \qquad \textrm{so that}\quad
  Q_k \underTk P_k =  \bmat{L_k \\ 0},
\end{equation}
where $Q_k$ and $P_k$ are orthogonal, $R_k$ is upper tridiagonal, and
$L_k$ is lower tridiagonal.  When $k < \ell$, both $R_k$ and $L_k$ are
nonsingular.  The \QLP decomposition of each $\underTk$ \red{is} performed
without permutations, and the left and right reflectors \red{are}
interleaved~\cite{S99} in order to ensure inexpensive updating of the
factors as $k$ increases.  The desired rank-revealing
properties~\eqref{qlpeqn1a} are retained in the last iteration when
$k=\ell$.

We elaborate on interleaved QLP here. As in \CSMINRES, $Q_k$
in~\eqref{qlpeqn3a} is a product of Householder reflectors\red{;}
see~\eqref{QRfac} and~\eqref{min7}\red{.} $P_k$ involves a product of
\emph{pairs} of Householder reflectors:
\begin{equation*}
  Q_k = Q_{k,k+1}\ \cdots \ Q_{3,4} \ \ Q_{2,3} \ \ Q_{1,2},\qquad
  P_k  =  P_{1,2} \ \ P_{1,3} P_{2,3}
  \ \cdots \ \ P_{k-2,k} P_{k-1,k}.
\end{equation*}
For \CSMINRESQLP to be efficient, in the $k$th iteration ($k\ge 3$)
the application of the left reflector $Q_{k,k+1}$ is followed
immediately by the right reflectors $P_{k-2,k}, P_{k-1,k}$, so that
only the last $3 \times 3$ bottom right submatrix of  $\underTk$ is changed.  
These ideas can be understood more easily from
the following compact form, which represents
the actions of right reflectors on $R_k$ obtained from~\eqref{min7}: 
\begin{align}
\label{qlpRightRef}
 &\hspace*{13pt}
  \bmat{\gamma_{k-2}^{(5)} &                    & \epsilon_k
     \\ \vartheta_{k-1}    & \gamma_{k-1}^{(4)} & \delta_k^{(2)}
     \\                    &                    & \gamma_{k}^{(2)}
       }
  \bmat{c_{k2} &   & \!\!\!\phantom-s_{k2}
     \\        & 1 &
     \\ \conj{s}_{k2} &   & \!\!-c_{k2}
       }
  \bmat{ 1
     \\    & c_{k3} & \!\!\!\phantom-s_{k3}
     \\    & \conj{s}_{k3} & \!\!-c_{k3}
       }
       \nonumber
\\ &=
  \bmat{\gamma_{k-2}^{(6)}
     \\ \vartheta_{k-1}^{(2)} & \gamma_{k-1}^{(4)}  & \delta_k^{(3)}
     \\ \eta_{k}              &                     & \gamma_{k}^{(3)}
       }
  \bmat{ 1
     \\    & c_{k3} & \!\!\!\phantom-s_{k3}
     \\    & \conj{s}_{k3} & \!\!-c_{k3}
       }
= \bmat{\gamma_{k-2}^{(6)}
     \\ \vartheta_{k-1}^{(2)} & \gamma_{k-1}^{(5)}  &
     \\ \eta_{k}              & \vartheta_{k}       & \gamma_{k}^{(4)}
       }.
\end{align}

\subsection{Solving the subproblem}

With $y_k=P_k u_k$, subproblem~\eqref{eqn:LSsubprob} after \QLP
factorization of $\underTk$ becomes
\begin{equation}  \label{Lsubproblem}
  u_k = \arg \min_{u \in \mathbb{C}^k}
            \,\normm{\bmat{L_k \\ 0} u - \bmat{t_k \\ \phi_{k}}},
\end{equation}
where $t_k$ and $\phi_{k}$ are as in~\eqref{QRfac}.  At the
\textit{start} of iteration $k$, the first $k\!-\!3$ elements of
$u_k$, denoted by $\mu_j$ for $j \le k\!-\!3$, are known from previous
iterations.
We need to solve \red{for} only the last three components of $u_k$ from the
bottom three equations of $L_k u_k = t_k$:
\begin{equation}
  \bmat{\gamma_{k-2}^{(6)}
     \\ \vartheta_{k-1}^{(2)} & \gamma_{k-1}^{(5)}  &
     \\ \eta_k               & \vartheta_k       & \gamma_k^{(4)}
       }
  \bmat{\mu_{k-2}^{(3)}
     \\ \mu_{k-1}^{(2)}
     \\ \mu_k
       }
= \bmat{\tau_{k-2}^{(2)}
     \\ \tau_{k-1}^{(2)}
     \\ \tau_k
       }
\equiv
  \bmat{\tau_{k-2} - \eta_{k-2} \mu_{k-4}^{(4)}
                   - \vartheta_{k-2} \mu_{k-3}^{(3)}
     \\ \tau_{k-1} - \eta_{k-1} \mu_{k-3}^{(3)}
     \\ \tau_k
       }.
\end{equation}
When $k < \ell$, $\underTk$ has full column rank, and so do $L_k$ and
the above $3\times 3$ triangular matrix.
\CSMINRESQLP obtains the same solution as \CSMINRES,
but by a different process (and with different rounding errors).
The \CSMINRESQLP estimate of $x$ is   \( x_k =
\conj{V}\!_ky_k = \conj{V}\!_kP_ku_k = W_ku_k, \) with theoretically
orthonormal $W_k\equiv \conj{V}\!_kP_k$, where
\begin{align}
  W_k 
         &=  \bmat{ \conj{V}\!_{k-1} P_{k-1} &  \conj{v}_k } P_{k-2,k} P_{k-1,k}    
             \label{eq:wvp}
 \\      &=  \bmat{ W_{k-3}^{(4)} &  w_{k-2}^{(3)} & w_{k-1}^{(2)} & \conj{v}_k} 
         P_{k-2,k} P_{k-1,k} \nonumber
 \\      &\equiv \bmat{ W_{k-3}^{(4)} & w_{k-2}^{(4)} & w_{k-1}^{(3)} & w_k^{(2)}}.
         \nonumber
\end{align}
\red{Finally}, we update å$x_{k-2}$ and compute $x_k$ by short-recurrence
orthogonal steps (using only the last three columns  
of $W_k$):
\begin{align}
x_{k-2}^{(2)}  &= x_{k-3}^{(2)} + w_{k-2}^{(4)} \mu_{k-2}^{(3)}
\text{, where } x_{k-3}^{(2)} \equiv W_{k-3}^{(4)} u_{k-3}^{(3)}, 
 \label{qlpeqnsol1}
\\     x_k    &= x_{k-2}^{(2)} + w_{k-1}^{(3)} \mu_{k-1}^{(2)}
                      + w_k    ^{(2)} \mu_k.      \label{qlpeqnsol2}
\end{align}

\subsection{Termination}  \label{sec:term}

When $k=\ell$ and $y_\ell = P_\ell u_\ell$, the final
subproblem~\eqref{eqn:LSsubprob-ell-2} becomes
\begin{align}  \label{eqn:LSsubprob-ell-3}
   & \min \norm{u_\ell}_2 \quad \text{s.t.} \quad u_\ell \in
  \arg\min_{\red{u} \in \mathbb{C}^{\ell}} \norm{L_\ell \red{u}-
    t_\ell}.
\end{align}
$Q_{\ell,\ell+1}$ is neither formed nor applied (see~\eqref{QRfac}
and~\eqref{min7}), and the \QR factorization stops.  To obtain the
minimum-length solution, we still need to apply
$P_{\ell-2,\ell}P_{\ell-1,\ell}$ on the right of $R_\ell$ and
$\overline{V}_\ell$ in~\eqref{qlpRightRef} and~\eqref{eq:wvp}\red{,}
respectively.  If $b \in \range(A)$, then $L_\ell$ is nonsingular\red{,} and
the process in the previous subsection applies.  If $b \not\in
\range(A)$, the last row and column of $L_\ell$ are zero, \red{that is},
$L_\ell = \smat{L_{\ell-1} \\ 0 & 0}$ (see~\eqref{qlpeqn3a}), and we
need to define $u_\ell \equiv \smat{u_{\ell-1} \\ 0}$ and solve only
the last two equations of $L_{\ell-1} u_{\ell-1} = t_{\ell-1}$:
  \begin{equation}  \label{eq:Lut}
     \bmat{\gamma_{\ell-2}^{(6)}
        \\ \vartheta_{\ell-1}^{(2)} & \gamma_{\ell-1}^{(5)}
          }
     \bmat{\mu_{\ell-2}^{(3)}
        \\ \mu_{\ell-1}^{(2)}
          }
   = \bmat{\tau^{(2)}_{\ell-2}
        \\ \tau^{(2)}_{\ell-1}
       }.
  \end{equation}
\red{Recurrence}~\eqref{qlpeqnsol2} simplifies to $x_\ell =
x_{\ell-2}^{(2)} + w_{\ell-1}^{(3)} \mu_{\ell-1}^{(2)}$.  The following theorem
proves that \CSMINRESQLP yields $x^{\dagger}$ in this last iteration.


\smallskip
\begin{theorem}   \label{theorem-MINRES-QLP}
  In \CSMINRESQLP, $x_\ell = x^\dagger$. 
\end{theorem}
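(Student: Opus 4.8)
The plan is to verify directly the two properties that characterize the pseudoinverse solution. Since $A = A\T$ we have $\Null(A)^\perp = \range(A^*) = \range(\overline{A})$, so $x^\dagger$ is the unique vector with (P1)~$A^*(Ax-b)=0$ and (P2)~$x \in \range(A^*)$. I would show that $x_\ell$ satisfies both, splitting on whether the system is compatible.

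If $b \in \range(A)$, then Theorem~\ref{thm:rankTk} makes $T_\ell$, and hence $R_\ell$ and $L_\ell$, nonsingular, so subproblem~\eqref{eqn:LSsubprob-ell-3} has a unique solution and the \LQ step is only an orthogonal re-parametrization: $u_\ell = L_\ell\inv t_\ell = P_\ell^* R_\ell\inv t_\ell$ gives $y_\ell = P_\ell u_\ell = R_\ell\inv t_\ell$, so \CSMINRESQLP returns the same point $x_\ell = \overline{V}_\ell y_\ell$ as \CSMINRES, and Theorem~\ref{theorem-singular-compatible} already yields $x_\ell = x^\dagger$.

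The substance is the case $b \notin \range(A)$. By Theorem~\ref{thm:rankTk}, $\rank(T_\ell) = \ell - 1$, and (as in Theorem~\ref{theorem-singular-incompatible}) the \QR step gives $\gamma_\ell = 0$, so the last row of $R_\ell$ is zero; the interleaved \LQ step then produces $L_\ell = \smat{L_{\ell-1} \\ 0 & 0}$ with $L_{\ell-1}$ nonsingular, the rank-revealing form~\eqref{qlpeqn1a} in the last iteration (see Section~\ref{sec:term}). Accordingly \CSMINRESQLP takes $u_\ell = \smat{L_{\ell-1}\inv t_{\ell-1} \\ 0}$, $y_\ell = P_\ell u_\ell$, $x_\ell = \overline{V}_\ell y_\ell = W_\ell u_\ell$, as in~\eqref{eq:Lut}. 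For (P2): $R_\ell P_\ell = L_\ell$ has zero last column, so $P_\ell e_\ell$ spans $\Null(R_\ell)$, which is one-dimensional (recall $\rank(R_\ell) = \rank(T_\ell) = \ell-1$) and equals $\Null(Q_{\ell-1}^* R_\ell) = \Null(T_\ell)$; since the last entry of $u_\ell$ vanishes, $\langle y_\ell, P_\ell e_\ell\rangle = e_\ell^* P_\ell^* P_\ell u_\ell = 0$, so $y_\ell \in \Null(T_\ell)^\perp = \range(T_\ell^*) = \range(\overline{T}_\ell)$, the last step using the complex symmetry of $T_\ell$ that the Saunders process inherits from $A$ (Section~\ref{sec:Lanproperties}); conjugating $A\overline{V}_\ell = V_\ell T_\ell$ to $\overline{A}\,V_\ell = \overline{V}_\ell\,\overline{T}_\ell$ then gives $x_\ell = \overline{V}_\ell y_\ell \in \overline{V}_\ell\,\range(\overline{T}_\ell) = \range(\overline{A}\,V_\ell) \subseteq \range(A^*)$. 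For (P1): from $T_\ell = Q_{\ell-1}^* R_\ell$, $\beta_1 e_1 = Q_{\ell-1}^* t_\ell$, and $R_\ell y_\ell = L_\ell u_\ell = \smat{t_{\ell-1} \\ 0}$, one gets $\beta_1 e_1 - T_\ell y_\ell = \tau_\ell Q_{\ell-1}^* e_\ell$, hence $r_\ell = b - Ax_\ell = V_\ell(\beta_1 e_1 - T_\ell y_\ell) = \tau_\ell V_\ell Q_{\ell-1}^* e_\ell$ and $A^*(Ax_\ell - b) = -\overline{A}\,r_\ell = -\tau_\ell\,\overline{V}_\ell\,\overline{T}_\ell\,Q_{\ell-1}^* e_\ell$; since the last row of $R_\ell$ vanishes, $e_\ell \perp \range(R_\ell)$, and as $Q_{\ell-1}$ is unitary, $Q_{\ell-1}^* e_\ell \perp Q_{\ell-1}^*\range(R_\ell) = \range(T_\ell)$, i.e.\ $Q_{\ell-1}^* e_\ell \in \range(T_\ell)^\perp = \Null(T_\ell^*) = \Null(\overline{T}_\ell)$, so the expression vanishes. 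Properties (P1)--(P2) give $x_\ell = x^\dagger$.

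I expect the incompatible case to be the only real obstacle: one must track the conjugations carefully ($A^* = \overline{A}$, and $T_\ell^* = \overline{T}_\ell$ by complex symmetry) and recognize that the rank-revealing shape $L_\ell = \smat{L_{\ell-1} \\ 0 & 0}$, together with the vanishing last component of $u_\ell$, is exactly what forces both $r_\ell \perp \range(A)$ and $x_\ell \in \range(A^*)$; note that, unlike the compatible case, $\overline{v}_1 = \overline{b}/\overline{\beta}_1 \notin \range(A^*)$ so $\range(\overline{V}_\ell) \not\subseteq \range(A^*)$, which is why dropping that last coordinate is indispensable. (Step (P1) can alternatively be shortened by invoking Lemma~\ref{minreslemma2}, used already for Theorem~\ref{theorem-singular-incompatible}, which writes $\overline{A}\,r_\ell$ as a multiple of $\gamma_\ell v_\ell = 0$.) The skew symmetric and skew Hermitian analogues follow by identical bookkeeping.
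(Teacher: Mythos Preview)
Your proof is correct and takes essentially the same route as the paper: both dispatch the compatible case to Theorem~\ref{theorem-singular-compatible}, and in the incompatible case both hinge on the observation that the vanishing last entry of $u_\ell$ forces $y_\ell \perp \Null(T_\ell)$, hence $y_\ell \in \range(\overline{T}_\ell)$ and $x_\ell = \overline{V}_\ell y_\ell \in \range(A^*)$. The only variation is in establishing the \LS property: the paper shows that the residual norm is minimized over $\range(W_\ell)$, whereas you verify the normal equations $A^*r_\ell=0$ by direct computation (and rightly note that Lemma~\ref{minreslemma2} offers a shortcut); your route is slightly more self-contained. One small notational slip: since $Q_{\ell,\ell+1}$ is never formed in the last iteration, the last entry of $t_\ell$---and hence the scalar multiplying $Q_{\ell-1}^*e_\ell$ in your expression for $r_\ell$---is $\phi_{\ell-1}$, not $\tau_\ell$; this does not affect the argument.
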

\smallskip

\begin{proof}
  When $b \in \range(A)$, the proof is the same as that for
  Theorem~\ref{theorem-singular-compatible}.

  When $b \notin \range(A)$, for all $u = [u_{\ell-1}\T \ \,\mu_\ell]\T
  \in \mathbb{C}^\ell$ that solves~\eqref{Lsubproblem}, \CSMINRESQLP
  returns the minimum-length \LS solution $u_\ell = [u_{\ell-1}\T \ \,0]\T$
  by the construction in~\eqref{eq:Lut}.  For any
  $x\in\range(W_{\ell})= \range(\overline{V}_\ell)  
  \subseteq \range(\overline{A})=\range({A^*})$ by~\eqref{eq:wvp} and 
  $A\overline{V}_\ell=V_\ell T_\ell$,
\begin{align*}
  \norm{Ax-b} &= \norm{AW_\ell u - b} = \norm{A\conj{V}\!_\ell P_\ell u - b}
    = \norm{V_\ell T_\ell P_\ell u - \beta_1V_\ell e_1}
    =       \norm{T_\ell P_\ell u - \beta_1 e_1}
\\ &=\normm{Q_{\ell-1} T_\ell P_\ell u - \bmat{t_{\ell-1} \\ \phi_{\ell}}} =
   \normm{\bmat{L_{\ell-1} & 0 \\ 0 & 0} u -\bmat{t_{\ell-1} \\ \phi_{\ell}}}.
\end{align*}
Since $L_{\ell-1}$ is nonsingular, $ |\phi_{\ell}| = \min
\norm{Ax-b}$ can be achieved by $x_\ell = W_\ell u_\ell = W_{\ell-1}
u_{\ell-1}$ and $\norm{x_\ell} = \norm{W_{\ell-1} u_{\ell-1}} =
\norm{u_{\ell-1}}$. Thus $x_\ell$ is the minimum-length
\LS solution of $\norm{Ax-b}$,
\red{that is}, $x_\ell = \arg\min\{\norm{x} \mid A^*A x=A^* b, \; x \in
\range({A^*})\}$.  Likewise $y_\ell = P_\ell u_\ell$
is the minimum-length \LS solution of $\norm{T_\ell y - \beta_1 e_1}$\red{,} and
so $y_\ell \in \range(T^*_\ell)$, \red{that is,} $y_\ell = T^*_\ell z =
\conj{T}\!_\ell z$ for some $z$. Thus $x_\ell = \conj{V}\!_\ell y_\ell =
\conj{V}\!_\ell \conj{T}\!_\ell z = \conj{A} V_\ell z = A^* V_\ell z\in\!
\range(A^*)$.  We know that $x^\dagger = \arg\min\{\norm{x} \mid A^* A
x=A^* b, \; x \in \mathbb{C}^n\}$ is unique and $x^\dagger \in
\range(A^*)$.  Since $x_\ell \in \range(A^*)$, we must have $x_\ell =
x^\dagger$.
\end{proof}

\section{Transferring CS-MINRES to \CSMINRES-QLP}
\label{sec:transfer}

\CSMINRES and \CSMINRESQLP behave  
similarly on well-conditioned systems. However, 
compared \red{with} \CSMINRES, \CSMINRESQLP requires one more vector of
storage, and each iteration needs 4 more axpy \red{operations} ($y \leftarrow \alpha x
+ y$) and 3 more vector \red{scalings} ($x \leftarrow \alpha x$).  It would
be a desirable feature to invoke \CSMINRESQLP from \CSMINRES only if
$A$ is ill-conditioned or singular.  The key idea is to transfer 
\CSMINRES to \CSMINRESQLP at an iteration $k < \ell$ when $\underTk$
has full column rank and is still well-conditioned. At such an
iteration, the \CSMINRES point $x_k^M$ and \CSMINRESQLP point $x_k$
are the same, so from~\eqref{minresxk}, \eqref{qlpeqnsol2},
and~\eqref{Lsubproblem}: $ x_k^M = x_k \Longleftrightarrow D_k t_k =
W_k L_k^{-1} t_k$.  From \eqref{eq:rdeqv}, \eqref{qlpeqn3a},
and~\eqref{eq:wvp},
\begin{equation}  \label{transfereqn1}
  D_k L_k=(\conj{V}\!_kR_k^{-1})(R_kP_k)=\conj{V}\!_kP_k=W_k.
\end{equation}
The vertical arrow in Figure~\ref{fig:bases} represents this
process. In particular, we transfer only the last three \CSMINRES
basis vectors in $D_k$ to the last three \CSMINRESQLP basis vectors in
$W_k$:
\begin{equation}  \label{transfereqn2} 
     \bmat{w_{k-2} & w_{k-1} & w_k} 
     = \bmat{d_{k-2} & d_{k-1} & d_k} \bmat{\gamma_{k-2}^{(6)}
     \\ \vartheta_{k-1}^{(2)} & \gamma_{k-1}^{(5)}  &
     \\ \eta_{k}              & \vartheta_{k}       & \gamma_{k}^{(4)}
       }.
\end{equation}

Furthermore, we need to generate the \CSMINRESQLP point
$\smash{x_{k-3}^{(2)}}$ in~\eqref{qlpeqnsol1} from the \CSMINRES point
$x_{k-1}^M$ by rearranging~\eqref{qlpeqnsol2}:
\begin{equation}  
  x_{k-3}^{(2)} = x_{k-1}^M - w_{k-2}^{(3)} \mu_{k-2}^{(2)}
    - w_{k-1}^{(2)} \mu_{k-1}.
\end{equation}  
Then the \CSMINRESQLP points $\smash{x_{k-2}^{(2)}}$ and $x_k$ can be
computed \red{by}~\eqref{qlpeqnsol1} and~\eqref{qlpeqnsol2}.

\red{From}~\eqref{transfereqn1} and~\eqref{transfereqn2}  we \red{clearly}
still need to do the right transformation $R_k P_k = L_k$ in the
\CSMINRES phase and keep the last $3 \times 3$ bottom right submatrix
of $L_k$ for each $k$ so that we are ready to transfer to \CSMINRESQLP
when necessary.  We then obtain a short recurrence for $\norm{x_k}$
(see Section~\ref{subsectsolnorm})\red{,} and for this computation we save
flops relative to the standalone \CSMINRES algorithm, which computes
$\norm{x_k}$ directly in the NRBE condition associated with
$\norm{r_k}$ in Table~\ref{tab-stopping-conditions}.

In the implementation of \CSMINRESQLP, the iterates transfer from
\CSMINRES to \CSMINRESQLP when an estimate of the condition number of
$T_k$ (see \eqref{cond2AQLP}) exceeds an input parameter
$\mathit{trancond}$.  Thus, $\mathit{trancond} > 1/\varepsilon$ leads
to \CSMINRES iterates throughout (that is, \CSMINRES standalone),
while $\mathit{trancond} = 1$ generates \CSMINRESQLP iterates from the
start (that is, \CSMINRESQLP standalone).

\begin{figure}    
\begin{center}
\vspace{3ex}
\includegraphics[width=1.8in]{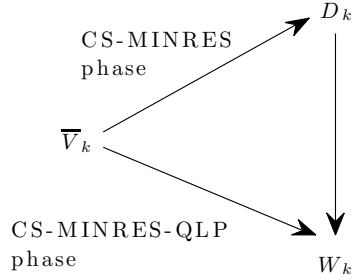}
\vspace{3ex}
\end{center}
\label{basis}
\caption{Changes of basis vectors within and between the two phases
  CS-MINRES and CS-MINRES-QLP; see equations~\eqref{eq:rdeqv},
  \eqref{eq:wvp}, and~\eqref{transfereqn2} for details.}
\label{fig:bases}
\end{figure}

\section{Preconditioned \CSMINRES and \CSMINRESQLP}  \label{sec:pcsminres}

Well-constructed two-sided preconditioners can preserve problem
symmetry and substantially reduce the number of iterations for
nonsingular problems. For singular compatible problems, we can still
solve the problems faster but generally obtain  \LS solutions
that are not of minimum length. \red{This} is not an issue due to 
algorithms but the way two-sided preconditioning is set up for
singular problems.  For incompatible systems (which are necessarily
singular), preconditioning alters the ``least squares'' norm.  To
avoid this difficulty\red{,} we could work with larger equivalent systems
that are compatible (see approaches
in~\cite[Section~7.3]{CPS11}), or we could apply a \red{right} 
preconditioner $M$ preferably such that $AM$ is complex symmetric so
that our algorithms are directly applicable. For example, if $M$ is
nonsingular (complex) symmetric and $AM$ is commutative, then $AMy
\approx b$ is a complex symmetric problem with $y \equiv M^{-1} x$.
This approach is efficient and straightforward. We devote the rest of
this section to \red{deriving} a two-sided preconditioning method.


We use a symmetric positive definite or a \emph{nonsingular} complex
symmetric preconditioner $M$.  For such an $M$, it is known that the
Cholesky factorization exists, \red{that is}, $M = CC^T$ for some lower
triangular matrix~$C$, which is real if $M$ is real, or complex if $M$
is complex.  We may employ commonly used construction techniques of
preconditioners \red{such as} diagonal preconditioning and incomplete Cholesky
factorization if the nonzero entries of $A$ are accessible.  It may
seem unnatural to use a symmetric positive definite preconditioner for
a complex symmetric problem. However, if available, its application
may be less expensive than a complex symmetric preconditioner.

We denote the square root of $M$ \red{by} \red{$M^{\myhalf}$}. It is known that a
complex symmetric root always exists for a nonsingular complex
symmetric $M$ even though it may not be unique;
see~\cite[Theorems~7.1, 7.2, and~7.3]{H08} or~\cite[Section
  6.4]{HJ91}.
Preconditioned \CSMINRES (or \CSMINRESQLP) applies itself to the
equivalent system $\tilde{A} \tilde{x} = \tilde{b}$, where $\tilde{A}
\equiv M^{-\myhalf} A M^{-\myhalf}$, $\tilde{b} \equiv M^{-\myhalf}
b$, and $x = M^{-\myhalf}\tilde{x}$.  Implicitly, we are solving an
equivalent complex symmetric system $C\inv AC^{-T} y = C\inv b$, where
$C\T x = y$. In practice, we work with $M$ itself (solving the linear
system in~(\ref{pminresd4})). For analysis, we can assume $C =
M^{\myhalf}$ for convenience.
An effective preconditioner for \CSMINRES or \CSMINRESQLP is one such
that~$\tilde{A}$ has a more clustered eigenspectrum and \red{becomes} better
conditioned, and it is inexpensive to solve linear systems that
involve~$M$.

\subsection{Preconditioned \red{Saunders} process}

Let \red{$\conj{V}_k$} denote the \red{Saunders} vectors of the $k$th extended Krylov subspace
generated by $\tilde{A}$ and $\tilde{b}$.  With $v_0
= 0$ and $\beta_1 v_1 = \tilde{b}$, for $k=1,2,\ldots$ we define
\begin{equation}  \label{pminresd4}
   z_k = \beta_k M^{ \myhalf} v_k, \qquad
   q_k = \beta_k M^{-\myhalf} \conj{v}_k,
   \qquad \text{so that} \quad M \conj{q}_k =z_k.
\end{equation}
Then
\(
  \beta_k = \norm{\beta_k v_k}
      = \sqrt{ q_k\T z_k },
\)
and the \red{Saunders} iteration is
\begin{align*}
   p_k &= \tilde{A} \conj{v}_k = M^{-\myhalf} \! A M^{-\myhalf} \conj{v}_k
                                     = M^{-\myhalf} \! A q_k / {\beta_k},
\\ \alpha_k           &= v_k^* p_k = q_k\T A q_k / {\beta_k^2},
\\ \beta_{k+1}v_{k+1} &= M^{-\myhalf}\! A M^{-\myhalf} \conj{v}_k
                            -\alpha_k v_k - \beta_k v_{k-1}.
\end{align*}
Multiplying the last equation by $M^{\myhalf}$\red{,} we get
\begin{align*}
   z_{k+1} = \beta_{k+1} M^{\myhalf} v_{k+1}
                & = A M^{-\myhalf} \conj{v}_k - \alpha_k M^{\myhalf} v_k
                                      - \beta_k  M^{\myhalf} v_{k-1}
\\         &= \frac{1}{\beta_k} A q_k - \frac{\alpha_k}{\beta_k} z_k
                                      - \frac{\beta_k}{\beta_{k-1}} z_{k-1}.
\end{align*}
The last expression involving consecutive $z_j$'s replaces the
three-term recurrence in $v_j$'s.  In addition, we need to solve a
linear system $M q_k = z_k$~\eqref{pminresd4} at each iteration.

\subsection{Preconditioned \CSMINRES}

From~\eqref{minresxk} and~\eqref{eq:rdeqv} we have the following
recurrence for the $k$th column of $D_k = \conj{V}\!_k R_k\inv$ and
$\tilde{x}_k$:
\[
  d_k = \bigl( \conj{v}_k - \delta_k^{(2)} d_{k-1}-\epsilon_k d_{k-2} \bigr)
           / \gamma_k^{(2)},
  \qquad
  \tilde{x}_k = \tilde{x}_{k-1} + \tau_k^{(2)} d_k.
\]
Multiplying the above two equations by $M^{-\myhalf}$ on the left and
defining $\tilde{d}_k = M^{-\myhalf}d_k$, we can update the solution
of our original problem by
\[
  \tilde{d}_k = \Bigl( \frac{1}{\beta_k} q_k
                - \delta_k^{(2)} \tilde{d}_{k-1} 
                - \epsilon_k \tilde{d}_{k-2} \Bigr)
                 \!\bigm/\! \gamma_k^{(2)},
  \qquad
  x_k = M^{-\myhalf} \tilde{x}_k
      = x_{k-1} + \tau_k ^{(2)} \tilde{d}_k.
\]

\subsection{Preconditioned \CSMINRESQLP} 
  \label{secPMINRESQLP}

A preconditioned \red{\CSMINRESQLP} can be derived similarly.  The
additional work is to apply right reflectors $P_k$ to $R_k$, and the
new subproblem bases are $W_k \equiv \conj{V}\!_k P_k$, with
$\tilde{x}_k = W_k u_k$. Multiplying the new basis and solution
estimate by $M^{-\myhalf}$ on the left, we obtain
\begin{align*}
   \widetilde{W}_k &\equiv M^{-\myhalf} W_k = M^{-\myhalf} \conj{V}\!_k P_k,
\\            x_k & = M^{-\myhalf} \tilde{x}_k
                    = M^{-\myhalf} W_k {u}_k
                    = \widetilde{W}_k {u}_k
                    = x_{k-2}^{(2)} +
                      \mu_{k-1}^{(2)} \tilde{w}_{k-1}^{(3)} +
                      \mu_k \tilde{w}_k^{(2)}.
\end{align*}

\section{Numerical experiments}  \label{sec:numerical}

In this section we present computational results based on \red{the} \Matlab~7.12
\red{implementations} of \CSMINRESQLP and \SSMINRESQLP, which are made
available to the public as open-source software and \red{accord} with the
philosophy of reproducible computational research~\cite{C94, CD02}. 
The computations were performed in double precision on a Mac
OS X machine with a \red{2.7 GHz} Intel Core i7 and \red{16 GB} RAM.

\subsection{Complex symmetric problems} \label{sec:num:cs}
Although the SJSU Singular Matrix Database~\cite{FSJSU} currently
contains only one complex symmetric matrix (named \texttt{dwg961a}) and
only one skew symmetric matrix (\texttt{plsk1919}), it has a sizable set of
singular symmetric matrices, which can be handled by the associated
\Matlab toolbox SJsingular~\cite{FSJSU2}. We constructed multiple
singular complex symmetric systems of the form $H \equiv i A$, where
$A$ is symmetric and singular. \red{All} the eigenvalues of
$H$ \red{clearly} lie on the imaginary axis. For a compatible system, we simulated
$b = Hz$, where $z_i \sim i.i.d.\ U(0,1)$, \red{that is}, $z_i$ were
independent and identically distributed random variables whose values
were drawn from the standard uniform distribution with support
$[0,1]$. For a \LS problem, we generated a random $b$ with $b_{i} \sim
i.i.d.\ U(0,1)$\red{,} and it is almost always true that $b$ is \emph{not} in
the range of the test matrix. In \CSMINRESQLP, we set the parameters
$\mathtt{maxit} = 4n$, $\mathtt{tol} = \varepsilon$, and
$\mathtt{trancond} = 10^{-7}$ for the stopping conditions in
Table~\ref{tab-stopping-conditions} and the transfer process from
\CSMINRES (see Section~\ref{sec:transfer}).


We compare the computed results of \CSMINRESQLP and \Matlab's \QMR \red{with}
solutions computed directly by the truncated \SVD (\TSVD)
of $H$ utilizing \red{\Matlab}'s function \texttt{pinv}.  For \TSVD we have
$   x_t \equiv \sum_{\sigma_i > t \norm{H} \varepsilon}
               \frac{1}{\sigma_i} u_i u_i^* b,
$
with parameter $t>0$.  Often $t$ is set to $1$, and sometimes to a
moderate number such as $10$ or $100$; it defines a cut-off point
relative to the largest singular value of $H$.  For example, if most
singular values are of order 1 and the rest are of order
$\norm{H}\varepsilon\approx10^{-16}$, we expect \TSVD to work better
when the small singular values are excluded, while \SVD (with $t=0$)
could return an exploding solution.

In Figure~\ref{cons50a} we present the results of 50 consistent
problems of the form $Hx=b$\red{. Given the computed TSVD solution $x^\dagger$,
the figure} plots the relative error norm \red{$\norm{\hat{x}-x^\dagger} / \norm{x^\dagger}$} of
approximate solution \red{$\hat{x}$} computed by \red{\QMR and \CSMINRESQLP}  with respect
to \TSVD solution against  
\red{$\kappa(H)\varepsilon$. (}It is known that an upper bound 
\red{on} the perturbation error of a singular linear
system involves the condition of the corresponding
matrix~\cite[Theorem~5.1]{SS}.\red{)} The diagonal dotted red line represents
the best results we could expect from any numerical method with double
precision.  We can see that both \red{\QMR and \CSMINRESQLP} did well
on all problems except for \red{two} in each case.  \CSMINRESQLP performed
slightly better \red{because} a few additional problems \red{solved by} \QMR attained relative
errors of less than $10^{-5}$.

\begin{figure}[ht]    
\includegraphics[width=4.7in]{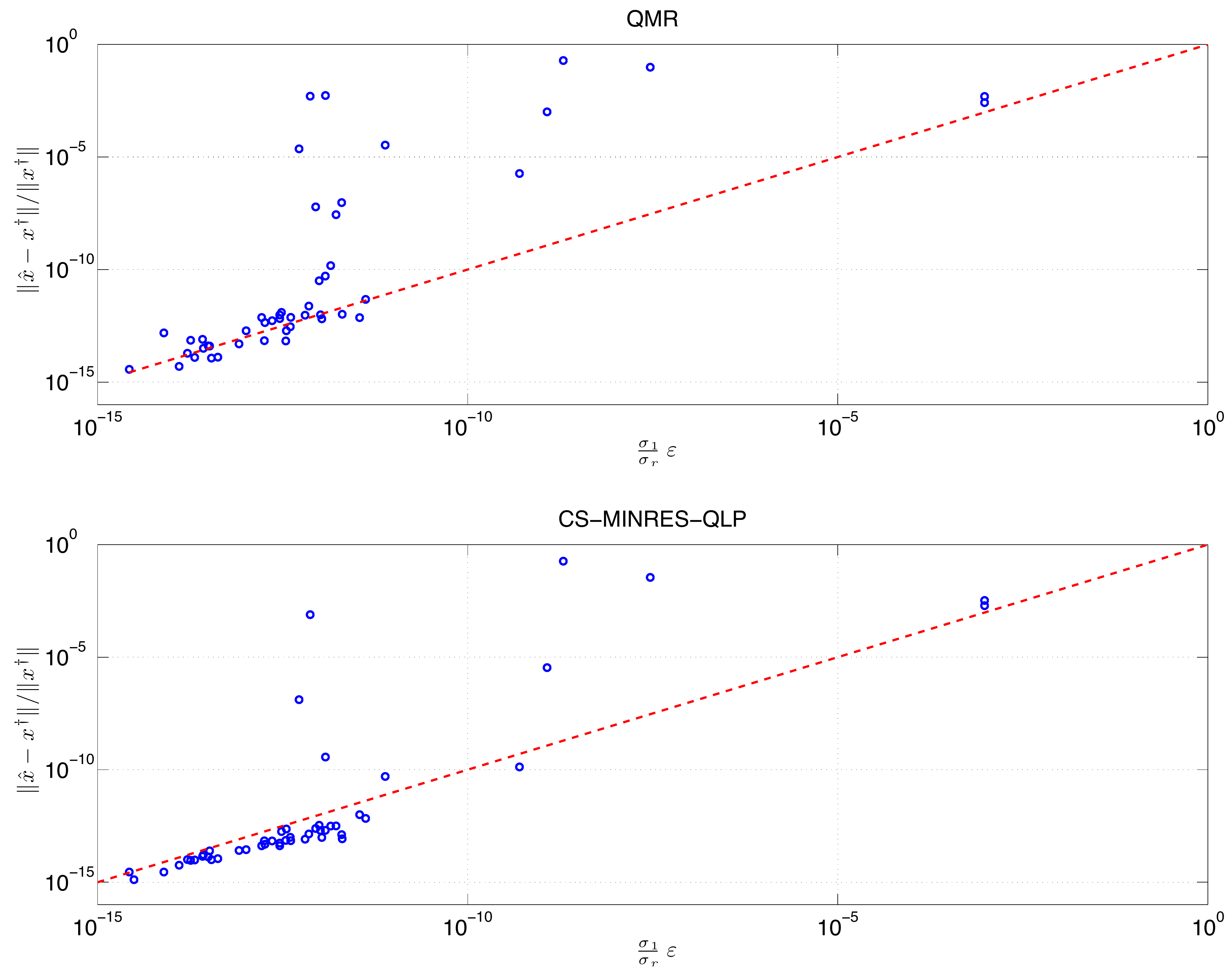}
\label{cons50a}
\caption{50 consistent singular complex symmetric systems. This figure
  is reproducible by \texttt{C13Fig7\_1.m}.}
\end{figure}

Our second test set \red{involves} complex symmetric matrices that have \red{a} more 
wide-spread eigenspectrum than those in the first test set.  Let $A=V
\Lambda V\T$ be an eigenvalue decomposition of symmetric $A$ with
$|\lambda_1| \ge \cdots \ge |\lambda_n|$. For $i=1,\ldots,n$, we
define $d_i \equiv (2u_i-1) |\lambda_1|$, where $u_i \sim
i.i.d.\ U(0,1)$ if $\lambda_i \ne 0$, or $d_i \equiv 0$ otherwise.
Then the complex symmetric matrix $M \equiv V D V\T + i A$ has the
same (numerical) rank as $A$\red{,} and its eigenspectrum is bounded by a
ball of radius approximately equal to $|\lambda_1|$ on the complex
plane. In Figure~\ref{cons50b} we summarize the results of solving 50
such complex symmetric linear systems.   \CSMINRESQLP \red{clearly}
behaved as stably as it did with the first test set.  However, \QMR is
obviously more sensitive to the nonlinear spectrum: two problems did
not converge\red{,} and about ten additional problems converged to their
corresponding ${x}^{\dagger}$ with no more than four digits of
accuracy.


Our third test set consists of linear \LS problems~\eqref{eqn4b}, in
which $A \equiv H$ in the upper plot of Figure~\ref{incons100} and $A
\equiv M$ in the lower plot.  In the case of $H$, \CSMINRESQLP did not
converge for two instances but agreed with the \TSVD solutions \red{to} five
or more digits for almost all other instances.  In the
case of $M$, \CSMINRESQLP did not converge for five instances but
agreed with the \TSVD solutions \red{to} five or more digits  
for almost all other instances. Thus the algorithm is to some
extent more sensitive to a nonlinear eigenspectrum in LS
problems\red{. This} is also expected by the perturbation result that an
upper bound of the relative error norm in a LS problem involves the
square of $\kappa(A)$~\cite[Theorem~5.2]{SS}. We did not run \QMR on
these test cases \red{because} the algorithm was not designed for \LS
problems.

{
\begin{figure}
\includegraphics[width=4.5in]{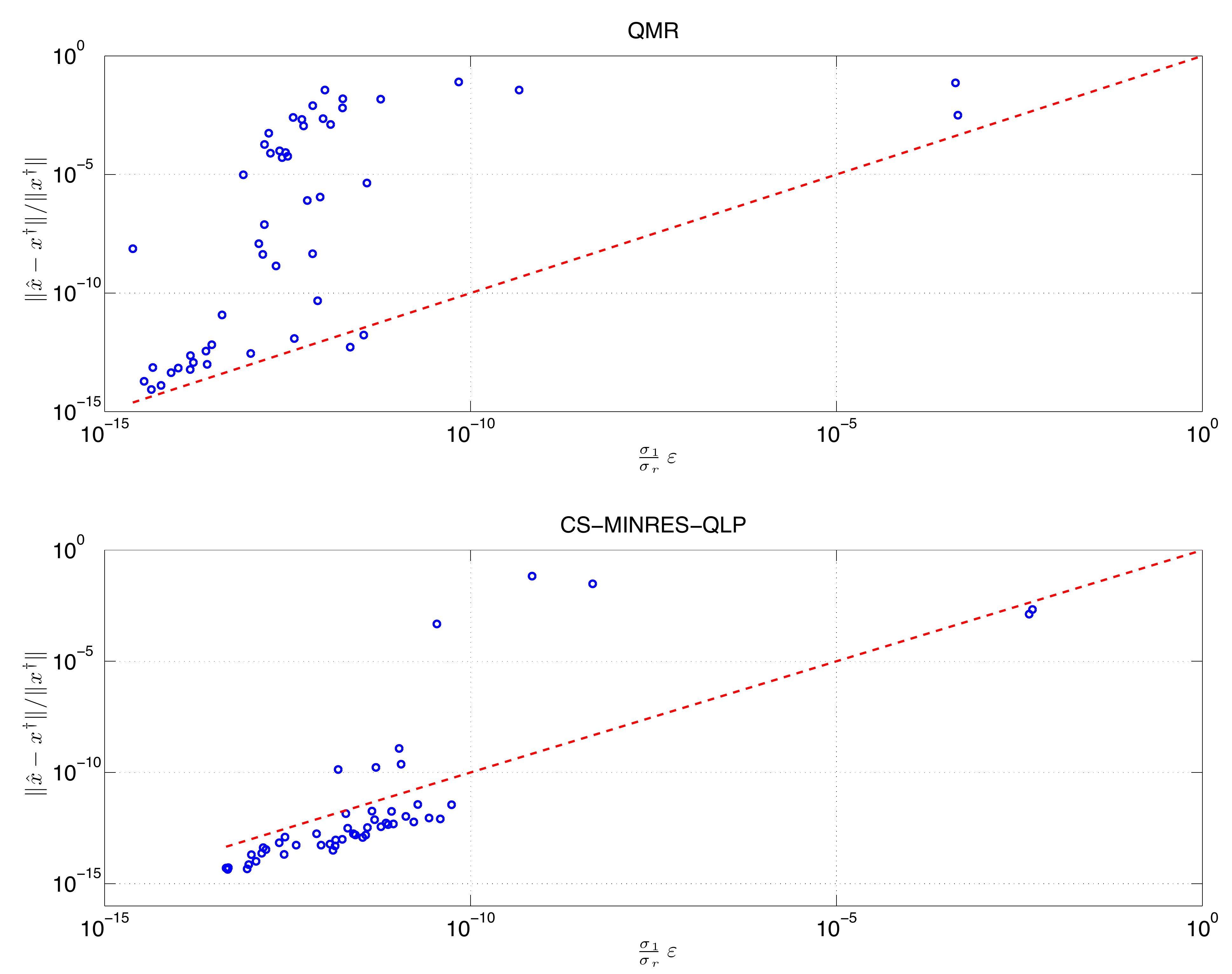}
\caption{50 consistent singular complex symmetric systems. This figure
  is reproducible by \texttt{C13Fig7\_2.m}.}
\label{cons50b}
\end{figure}
 
\begin{figure}
\includegraphics[width=4.5in]{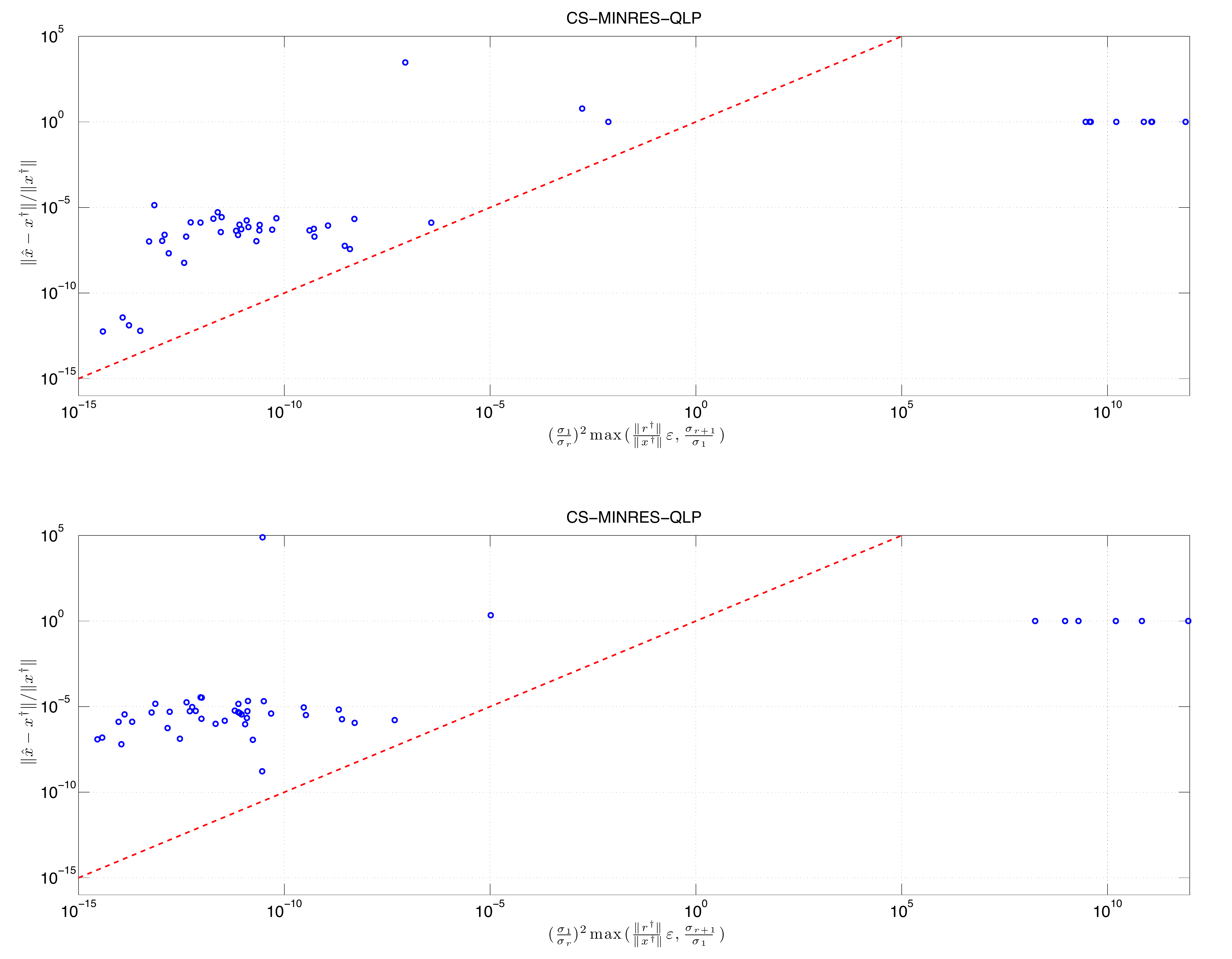}
\caption{100 \emph{inconsistent} singular complex symmetric
  systems. \red{We used matrices $H$ in the upper plot and $M$ in the lower plot, 
  where $H$ and $M$ are defined in Section~\ref{sec:num:cs}}. This figure is reproducible by \texttt{C13Fig7\_3.m}.}
\label{incons100}
\end{figure}
}

\subsection{Skew symmetric problems}
Our fourth test collection consists of 50 skew symmetric linear
systems and 50 singular skew symmetric \LS problems~\eqref{eqn4b}.
The matrices are constructed by $S = \mathtt{tril}(A) -
\mathtt{tril}(A)\T$, where $\mathtt{tril}$ extracts the lower
triangular part of a matrix.  In both cases---linear systems in the
upper subplot of Figure~\ref{100ss} and LS problems in the lower
subplot---\SSMINRESQLP did not converge for six instances but agreed
with the \TSVD solutions for more than ten digits of accuracy for
almost all other instances.

\begin{figure}
\includegraphics[width=4.5in]{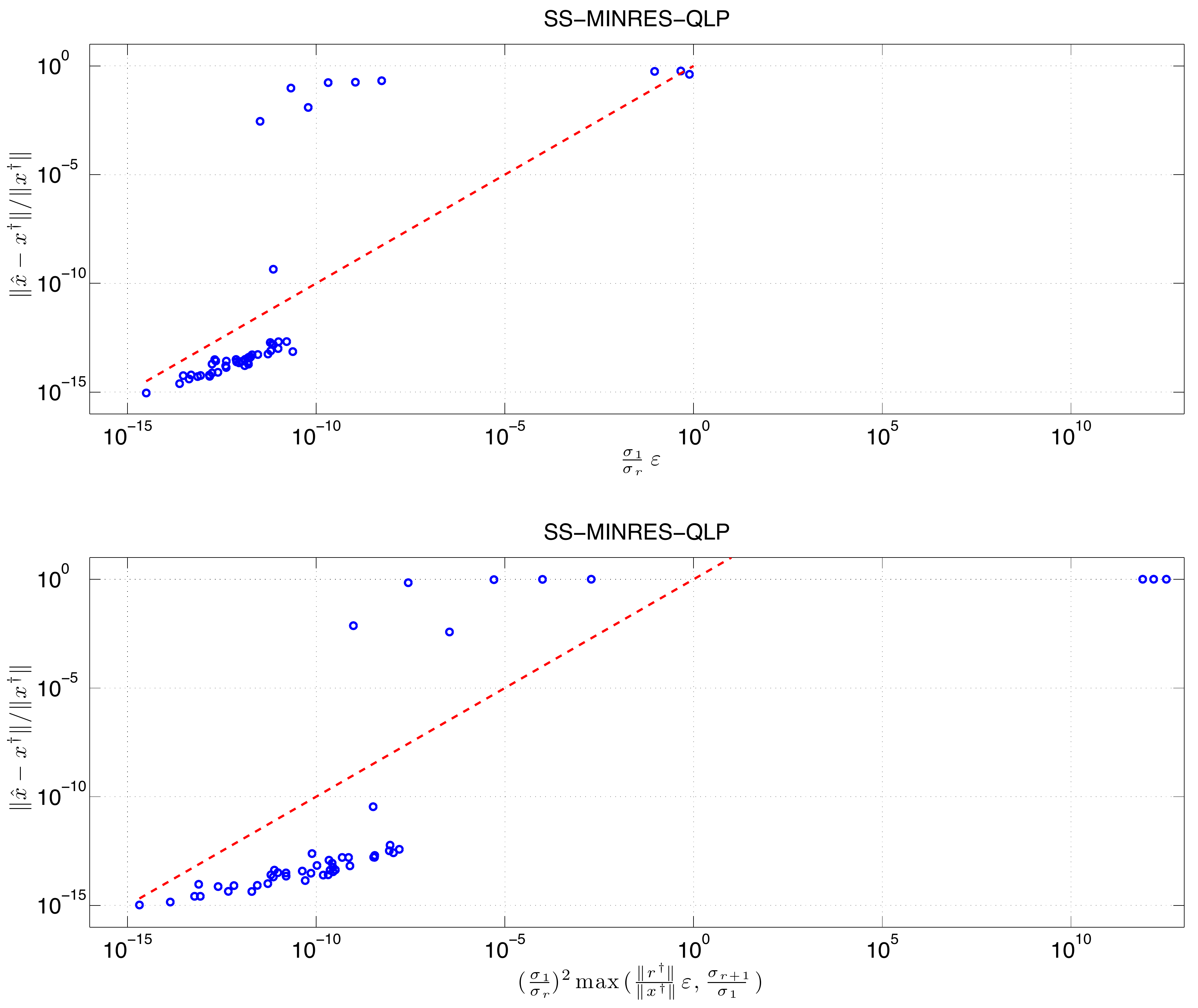}
\caption{100 singular skew symmetric systems.  Upper: 50 compatible
  linear systems. Lower: 50 LS problems. This figure is reproducible
  by \texttt{C13Fig7\_4.m}.}
\label{100ss}
\end{figure}

\subsection{Skew Hermitian problems}
We also have created a test collection of 50 skew
Hermitian linear systems and 50 skew Hermitian \LS
problems~\eqref{eqn4b}.  Each  skew Hermitian matrix is
constructed \red{as} $T = S + iB$, where $S$ is skew symmetric as defined in
the last test set, and $B \equiv A - \diag([a_{11},\ldots,a_{nn}])$\red{; in other words,}
$B$ is $A$ with the diagonal elements set to zero and is thus
symmetric.  We solve the problems using the original \MINRESQLP for
Hermitian problems by the transformation $(iT) x \approx ib$.  In the
case of linear systems in the upper subplot of Figure~\ref{100sh},
\SHMINRESQLP did not converge for \red{six instances}. For the other instances \SHMINRESQLP computed
approximate solutions that matched the \TSVD solutions for more than
ten digits of \red{accuracy.}  As for \red{the} LS
problems in the lower subplot of Figure~\ref{100sh}, only five
instances did not converge.

\begin{figure}
\includegraphics[width=4.5in]{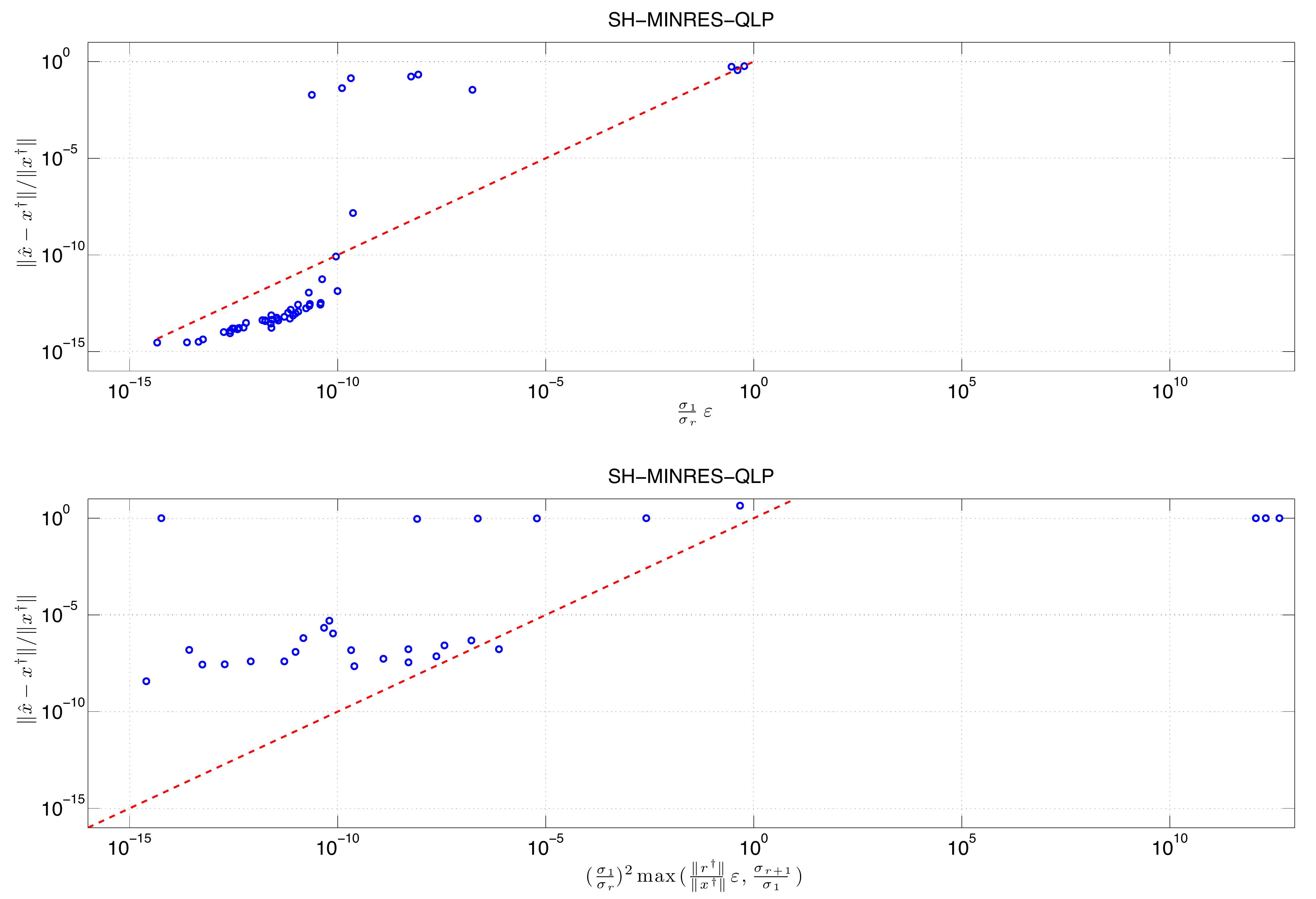}
\caption{100 singular skew Hermitian systems.  Upper: 50 compatible
  linear systems. Lower: 50 LS problems. This figure is reproducible
  by \texttt{C13Fig7\_5.m}.}
\label{100sh}
\end{figure}
\section{Conclusions}  \label{sec:conclusions}
\red{We take advantage of two Lanczos-like frameworks for square
matrices or linear operators with special symmetries. In particular, the framework
for complex-symmetric problems~\cite{BS99} is a special case of the Saunders-Simon-Yip process~\cite{SSY88} with the starting vectors chosen to be $b$ and $\bar{b}$; we name the complex-symmetric process the \textit{Saunders process} and the corresponding extended Krylov subspace  from the $k$th iteration  \textit{Saunders subspace} $S_k(A,b)$.} 

\CSMINRES constructs its $k$th solution estimate from the short recursion
$x_k = D_k t_k = x_{k-1} + \tau_k d_k$~\eqref{minresxk}, where
$n$ separate triangular systems $R_k\T D_k\T = V_k^*$ are solved to
obtain the $n$ elements of each direction $d_1, \ldots, d_k$.  (Only
$d_k$ is obtained during iteration $k$, but it has $n$ elements.)
In contrast, \CSMINRESQLP constructs $x_k$ using
orthogonal steps: $x_k = W_ku_k =x_{k-2}^{(2)} + w_{k-1}^{(3)} \mu_{k-1}^{(2)}
 + w_k^{(2)} \mu_k$; see
\eqref{qlpeqnsol1}--\eqref{qlpeqnsol2}.  Only one triangular system
$L_k u_k = t_k$ \eqref{Lsubproblem} is involved for each $k$.
Thus \CSMINRESQLP is numerically more stable than \CSMINRES.  
\red{The} additional work and storage are moderate, and 
efficiency is retained by transferring from \CSMINRES to  
\CSMINRESQLP   only when the estimated condition  of $A$
exceeds an input parameter value.

\TSVD is known to use rank-$k$ approximations to $A$ to find
approximate solutions to $\min\norm{Ax - b}$ that serve as a form of
\textit{regularization}.  It is fair to conclude from the results that
like other Krylov methods \CSMINRES have built-in regularization
features~\cite{HO93, HN96, KS99}.  Since \CSMINRESQLP monitors more
carefully and constructively the rank of $T_k$, which could be $k$ or
$k\!-\!1$, we may say that regularization is a stronger feature in
\CSMINRESQLP, as we have shown in our numerical examples.

Like \CSMINRES and \CSMINRESQLP, \SSMINRES and \SSMINRESQLP are
readily applicable to skew symmetric linear systems. \red{Similarly, we have
\SHMINRES and \SHMINRESQLP for skew Hermitian problems.} We summarize and
compare these methods in Appendix~\ref{sec:compare}.
\red{\CG and \SYMMLQ for problems with these special symmetries can be derived likewise.}

\subsection*{\red{Software and reproducible research}}  
\label{sec:software}

\Matlab~7.12 and Fortran~90/95 implementations of \MINRES and
\MINRESQLP for symmetric, Hermitian, skew symmetric, skew Hermitian,
and complex symmetric linear systems with short-recurrence solution
and norm estimates as well as efficient stopping conditions are
available from the \MINRESQLP project website~\cite{MinresqlpMatlab}.

Following the philosophy of reliable reproducible computational research as
advocated in~\cite{C94, CD02, C13c}, for each figure and example in this
paper we mention either the source or the specific \Matlab command.
Our \Matlab scripts are available at~\cite{MinresqlpMatlab}.

\subsection*{\red{Acknowledgments}}  \label{sec:ack}
%
We thank our anonymous reviewers for their insightful suggestions for
improving this manuscript.  We are grateful \red{for} the feedback and
encouragement from Peter Benner, Jed Brown, \red{Xiao-Wen Chang, Youngsoo Choi}, Heike Fassbender, Gregory
Fasshauer, Ian Foster, \red{Pedro Freitas},  Roland Freund, Fred Hickernell, Ilse Ipsen,
Sven Leyffer, Lek-Heng Lim, \red{Lawrence Ma}, Sayan Mukherjee, Todd Munson, Nilima
Nigam, Christopher Paige, Xiaobai Sun, Paul Tupper, Stefan Wild,  
Jianlin Xia, \red{and Yuan Yao} during the development of this work. We appreciate the opportunities for presenting our algorithms in the 2012 Copper
Mountain Conference on Iterative Methods, the 2012 Structured
Numerical Linear and Multilinear Algebra Problems conference, and the
2013 SIAM Conference on Computational Science and Engineering. In
particular, we thank Gail Pieper, Michael Saunders, \red{and} Daniel Szyld for
their detailed comments, which resulted in a more polished
exposition. We thank Heike Fassbender, Roland Freund, and Michael
Saunders for helpful discussions on modified Krylov subspace methods
during Lothar Reichel's 60th birthday conference.

\clearpage
 
\appendix

\section{Pseudocode of algorithms} \label{sec:pseudo}

\begin{algo}{h}

  \Inputs{$A,b,\sigma,\operatorname{maxit}$\red{, where $A$ is complex symmetric, $\sigma$ is real or complex}}

  \smallskip
   $v_{0}=0$, \qquad $v_1=b$, \qquad $\beta_{1}=\left\Vert
  b\right\Vert$, \qquad $k=0$
  
  
  \While{$k\leq\operatorname{maxit}$}{
     \lIf{$\beta_{k+1} > 0$}    {$v_{k+1} \leftarrow v_{k+1}/\beta_{k+1}$ }
     \lElse{ STOP }  
     
     $k\leftarrow k+1$
      
      $p_{k}=A\conj{v}_{k}-\sigma \red{\conj{v}_{k}}$
      
      $\alpha_{k}=v_{k}^* p_{k}$
      
      $p_{k}\leftarrow p_{k}-\alpha_{k}v_{k}$

      $v_{k+1}=p_{k}-\beta_{k}v_{k-1}$
      
      $\beta_{k+1}= \norm{v_{k+1}}$ 
   }
  \Outputs{$V_\ell, T_\ell$}
  \caption{\red{Saunders process}.}
  \label{algo-cs-lanczos}
\end{algo}

\begin{algo}{h}

  \Inputs{$A,b,\operatorname{maxit}$, \red{where $A$ is skew symmetric}}

  \smallskip
   $v_{0}=0$, \qquad $v_1=b$, \qquad $\beta_{0} =0 $, \qquad
  $\beta_{1}=\left\Vert b\right\Vert$, \qquad $k=0$
  
  
  \While{$k\leq\operatorname{maxit}$}{
     \lIf{$\beta_{k+1} > 0$}    {$v_{k+1} \leftarrow v_{k+1}/\beta_{k+1}$ }
     \lElse{ STOP }  
     
     $k\leftarrow k+1$
      
      $p_{k}=Av_{k}- v_{k}$

      $v_{k+1}=p_{k}-\beta_{k}v_{k-1}$
      
      $\beta_{k+1}= \norm{v_{k+1}}$  
   }
  \Outputs{$V_\ell, T_\ell$}
  \caption{SS-Lanczos.}
  \label{algo-ss-lanczos}
\end{algo}

\begin{algo}{h}

  \Inputs{$A,b,\operatorname{maxit}$\red{, where $A$ is skew Hermitian}}

  \smallskip
   $v_{0}=0$, \qquad $v_1=\red{i}b$, \qquad $\beta_{1}=\left\Vert
  b\right\Vert$, \qquad $k=0$
  
  
  \While{$k\leq\operatorname{maxit}$}{
     \lIf{$\beta_{k+1} > 0$}    {$v_{k+1} \leftarrow v_{k+1}/\beta_{k+1}$ }
     \lElse{ STOP }  
     
     $k\leftarrow k+1$
      
      $p_{k}= i Av_{k}-i v_{k}$
      
      $\alpha_{k}=v_{k}^* p_{k}$
      
      $p_{k}\leftarrow p_{k}-\alpha_{k}v_{k}$

      $v_{k+1}=p_{k}-\beta_{k}v_{k-1}$
      
      $\beta_{k+1}= \norm{v_{k+1}}$  
   }
  \Outputs{$V_\ell, T_\ell$}
  \caption{SH-Lanczos.}
  \label{algo-sh-lanczos}
\end{algo}

\begin{algo}{tb}

  \Inputs{$a,b$}

  \smallskip

  \lIf{$\left\vert b \right\vert =0$} 
         {$c=1$, \qquad $s=0$, \qquad $r = a $\;}
  \ElseIf{$\left\vert a \right\vert=0$}
         {$c=0$, \qquad $s=1$, \qquad  $r=b$}
  \ElseIf{$\left\vert b \right\vert \ge \left\vert a \right\vert$}
         {$\tau=\left\vert a \right\vert /\left\vert b \right\vert$, 
         \qquad $c=1/\sqrt{1+\tau^{2}}$,
         \qquad $s=c \tau (\overline{b/a})$, \qquad $c \leftarrow c \tau$, \qquad $r=b/\overline{s}$}
  \ElseIf{$\left\vert a \right\vert > \left\vert b \right\vert$}
         {$\tau=\left\vert b \right\vert /\left\vert a \right\vert$, \qquad $c=1/\sqrt{1+\tau^{2}}$,
         \qquad $s=c  (\overline{b/a}) $, \qquad $r=a/c$}

  \Outputs{$c,s,r$}
  \caption{Algorithm SymOrtho.}
  \label{algo-symortho}
\end{algo}

\begin{algo}{p}

  \Inputs{$A, b, \sigma, M$\red{, where $A=A^T\in \mathbb{C}^{n\times n}$, $M= M\T\in \mathbb{C}^{n\times n}$ is nonsingular, $\sigma \in \mathbb{C}$ }}

  \smallskip

  $z_0 = 0$,
  \qquad $z_1 = b$,
  \qquad Solve $Mq_1 = z_1$,
  \qquad $\beta_1 = \sqrt{b\T q_1}$
  \tcc*[f]{Initialize}\;

  $w_0 = w_{-1} = 0$,
  \qquad $x_{-2} = x_{-1} = x_0 = 0$\;
  $c_{0,1} \!=\! c_{0,2} \!=\! c_{0,3} \!=\! -1$,
  \quad $s_{0,1} \!=\! s_{0,2} \!=\! s_{0,3}\!=\! 0$,
  \quad $\phi_0 \!=\! \beta_1$,
  \quad $\tau_0 \!=\! \omega_0 \!=\! \chi_{-2} \!=\! \chi_{-1} \!=\! \chi_0 \!=\! 0$\;

  $\delta_1 = \gamma_{-1} = \gamma_0
   = \eta_{-1} = \eta_0 = \eta_1
   = \vartheta_{-1} = \vartheta_0 = \vartheta_1
   = \mu_{-1} = \mu_0 = 0$,
  \quad $\mathcal{A} = 0, \quad  \kappa = 1$\;

  $k=0$\;

  \BlankLine

  \While{no stopping condition is satisfied}{
    $k\leftarrow k+1$\;

    $p_k = Aq_k - \sigma q_k$,
    \qquad $\alpha_k = \frac{1}{\beta_k^2}q_k\T p_k$
    \tcc*[f]{Preconditioned Lanczos}\;

    $z_{k+1} = \frac{1}{\beta_k} p_k - \frac{\alpha_k}{\beta_k}z_k
             - \frac{\beta_k}{\beta_{k-1}} z_{k-1}$\;

    Solve $Mq_{k+1} = z_{k+1}$,
    \qquad $\beta_{k+1} = \sqrt{q_{k+1}\T z_{k+1}}$\;

    \lIf{$k = 1$}{$\rho_k = \norm{[\alpha_k \ \; \beta_{k+1}]}$}
    \lElse{$\rho_k = \norm{[\beta_k \ \; \alpha_k \ \; \beta_{k+1}]}$}\;

    $\delta_k^{(2)} = c_{k-1,1} \delta_k+s_{k-1,1} \alpha_k$
    \tcc*[f]{Previous left reflection\dots}\;

    $\gamma_k = s_{k-1,1}\delta_k -c_{k-1,1} \alpha_k$
    \tcc*[f]{on middle two entries of $\underTk e_k$\dots}\;

    ${\epsilon}_{k+1} = s_{k-1,1} \beta_{k+1}$
    \tcc*[f]{produces first two entries in $\underTkp e_{k+1}$}\;

    $\delta_{k+1} = -c_{k-1,1} \beta_{k+1}$\;

    $c_{k1}, s_{k1}, \gamma_k^{(2)}
      \leftarrow \SymOrtho(\gamma_k, \beta_{k+1})$
    \tcc*[f]{Current left reflection}\;

    $c_{k2}, s_{k2}, \gamma_{k-2}^{(6)}
      \leftarrow \SymOrtho(\gamma_{k-2}^{(5)}, \epsilon_k)$
    \tcc*[f]{First right reflection}\;

    $\delta_k^{(3)} = s_{k2}\vartheta_{k-1} - c_{k2} \delta_k^{(2)}$,
    \qquad $\gamma_k^{(3)} = -c_{k2}\gamma_k^{(2)}$,
    \qquad $\eta_k = s_{k2} \gamma_k^{(2)}$\;

    $\vartheta_{k-1}^{(2)} = c_{k2} \vartheta_{k-1} + s_{k2} \delta_k^{(2)}$\;

    $c_{k3}, s_{k3}, \gamma_{k-1}^{(5)}
      \leftarrow \SymOrtho(\gamma_{k-1}^{(4)}, \delta_k^{(3)})$
    \tcc*[f]{Second right reflection\dots}\;

    $\vartheta_k = s_{k3}\gamma_k^{(3)}$,
    \qquad $\gamma_k^{(4)} = -c_{k3}\gamma_k^{(3)}$
    \tcc*[f]{to zero out $\delta_k^{(3)}$}\;

    $\tau_k^{(2)} = c_{k1} \tau_{k}$
    \tcc*[f]{Last element of $t_k$}\;

    $\phi_k =\phi_{k-1} |s_{k1}|$, \quad $\psi_{k-1} = \phi_{k-1}
         \norm{[\smash{\gamma_k \ \; \delta_{k+1}}]}$
    \tcc*[f]{Update $\norm{r_k}$, $\norm{Ar_{k-1}}$}\;

    \lIf{$k=1$}{$\gamma_{\min}=\gamma_{1}$}
    \lElse{$\gamma_{\min} \leftarrow \min{ \{ \gamma_{\min},
           \gamma_{k-2}^{(6)}, \gamma_{k-1}^{(5)}, | \gamma_k^{(4)}|  \}}$}\;

    $\mathcal{A}^{(k)} = \max{ \{\mathcal{A}^{(k-1)}, \rho_k,
           \gamma_{k-2}^{(6)}, \gamma_{k-1}^{(5)}, |\gamma_k^{(4)}| \}}$
    \tcc*[f]{Update $\norm{A}$}\;

    $\omega_k = \norm{[\smash{\omega_{k-1} \ \; \tau_k^{(2)}}]}$,
    \qquad $\kappa \leftarrow \mathcal{A}^{(k)} / \gamma_{\min}$
    \tcc*[f]{Update $\norm{A x_k}$, $\kappa(A)$}\;

    $w_k = -(c_{k2} / \beta_k) q_k + s_{k2} w_{k-2}^{(3)}$
    \tcc*[f]{Update $w_{k-2}$, $w_{k-1}$, $w_k$}\;

    $w_{k-2}^{(4)} = (s_{k2} / \beta_k) q_k + c_{k2} w_{k-2}^{(3)}$

    \lIf{$k>2$}
        {$w_k^{(2)} = s_{k3} w_{k-1}^{(2)} - c_{k3} w_k$,
         \qquad $w_{k-1}^{(3)} = c_{k3} w_{k-1}^{(2)} + s_{k3} w_k$}\;

    \lIf{$k>2$}
        {$\mu_{k-2}^{(3)} = (\tau_{k-2}^{(2)} - \eta_{k-2} \mu_{k-4}^{(4)}
                                        - \vartheta_{k-2} \mu_{k-3}^{(3)})
                            / \gamma_{k-2}^{(6)}$}
    \tcc*[f]{Update $\mu_{k-2}$}\;

    \lIf{$k>1$}
        {$\mu_{k-1}^{(2)} =(\tau_{k-1}^{(2)} - \eta_{k-1} \mu_{k-3}^{(3)} -
         \vartheta_{k-1}^{(2)} \mu_{k-2}^{(3)}) / \gamma_{k-1}^{(5)}$}
    \tcc*[f]{Update $\mu_{k-1}$}\;

    \lIf{$\gamma_k^{(4)} \neq 0$}
        {$\mu_k = (\tau_k^{(2)} - \eta_k \mu_{k-2}^{(3)}
         - \vartheta_k \mu_{k-1}^{(2)}) / \gamma_k^{(4)}$}
    \lElse{$\mu_k = 0$}
    \tcc*[f]{Compute $\mu_k$}\;

    $x_{k-2}^{(2)}  = x_{k-3}^{(2)}  + \mu_{k-2}^{(3)} w_{k-2}^{(3)}$
    \tcc*[f]{Update $x_{k-2}$}\;

    $x_k = x_{k-2}^{(2)}  + \mu_{k-1}^{(2)} w_{k-1}^{(3)}
                       + \mu_k w_k^{(2)}$
    \tcc*[f]{Compute $x_{k}$}\;

    $\chi_{k-2}^{(2)} = \norm{[\smash{\chi_{k-3}^{(2)}  \ \; \mu_{k-2}^{(3)}}]}$
    \tcc*[f]{Update $\norm{x_{k-2}}$}\;

    $\chi_k = \norm{[\smash{\chi_{k-2}^{(2)}  \ \; \mu_{k-1}^{(2)}
                                       \ \; \mu_k}]}$
    \tcc*[f]{Compute $\norm{x_k}$}\;
  }

  \BlankLine

  $x = x_k$,
  \quad $\phi = \phi_k$,
  \quad $\psi = \phi_k \norm{[\smash{\gamma_{k+1}
                                      \ \; \delta_{k+2}}]}$,
  \quad $\chi = \chi_k$,
  \quad $\mathcal{A} = \mathcal{A}^{(k)}$,
  \quad $\omega = \omega_k$\;

  \Outputs{$x, \phi ,\psi, \chi, \mathcal{A}, \kappa, \omega$}\;

  \tcc*[f]{$c,s \leftarrow \SymOrtho(a,b)$ \rm is a stable form for computing
           $r = \sqrt{a^2+b^2}$,  $c = \frac{a}{r}$,  $s = \frac{b}{r}$}\;

  \caption{Preconditioned CS-MINRES-QLP to solve $(A - \sigma I)x \approx b$.}
  \label{pcs-minresqlpalgo}
\end{algo}

\section{Stopping conditions and norm estimates}  \label{sec:QLPstop}

This section derives several short-recurrence norm estimates
in~\red{\MINRES and \MINRESQLP for complex symmetric and skew Hermitian systems}.  
As before, we assume exact arithmetic
throughout, so that $V_k$ and $Q_k$ are orthonormal.
Table~\ref{tab-stopping-conditions} summarizes how these norm
estimates are used to formulate six groups of concerted stopping
conditions.  The second NRBE test is specifically designed for \LS
problems, which have the properties $r\ne 0$ but $A^*r=0$; it is
inspired by Stewart~\cite{Stewart-Rice-1977} and stops the algorithm
when $\norm{A^* r_k}$ is small relative to its upper bound $\norm{A}
\norm{r_k}$.

\begin{table}[h!]    
\caption{Stopping conditions in CS-MINRES and SH-MINRES.  NRBE means
  normwise relative backward error, and $\mathit{tol}$,
  $\mathit{maxit}$, $\mathit{maxcond}$ and $\mathit{maxxnorm}$ are
  input parameters.  All norms and $\kappa(A)$ are estimated by
  CS-MINRES and SH-MINRES.}
\label{tab-stopping-conditions}
\begin{center}
  \begin{tabular}{|l|l|l|}
    \hline
     Lanczos & NRBE
              & Regularization attempts
  \\ \hline $ $ & &
  \\[-8pt]
     $\beta_{k+1} \le n\norm{A} \varepsilon$
             & $\norm{r_k} / \left(\norm{A}\norm{x_k}+\norm{b}\right)
                   \le \max(\mathit{tol},\varepsilon)$
              & $\kappa(A) \geq \max(\mathit{maxcond},\frac{1}{\varepsilon})$
  \\[4pt]
     $k = \mathit{maxit}$
             & $\norm{\conj{A}r_k} / \left(\norm{A} \norm{r_k} \right)
                   \le \max(\mathit{tol},\varepsilon)$
             & $\norm{x_k} \geq  \mathit{maxxnorm}$  
  \\[3pt]   \hline 
        \CSMINRES & Degenerate cases
                  & Erroneous input
  \\ \hline $ $ & &
  \\[-8pt]
   $ \left| \gamma_k^{(4)} \right| < \varepsilon $
              & $\beta_1=0\quad \Rightarrow \quad x^\dagger=0 $
              & $ y^*(Az) \ne \pm z^*(\conj{A}y)$
  \\[4pt]
               & $\beta_2=0  \quad \Rightarrow  \quad x^\dagger = \overline{b}/\alpha_1$ 
               & $ \qquad \Rightarrow \qquad A\ne \pm A\T$
  \\ \hline
  \end{tabular}
\end{center}
\end{table}

\subsection{Residual and residual norm}

First we derive recurrence relations for $r_k$ and its norm
$\norm{r_k} = |\phi_k|$. The results are true for \CSMINRES and
\CSMINRESQLP.


\smallskip
\begin{lemma}
  \label{minresqlp_r} Without loss of generality, let $x_0=0$. \red{We have the results below.} 
\begin{enumerate} 
\item $r_0=b$ and $\norm{r_0} = \phi_0 = \beta_1$.
\item For $k=1,\ldots, \ell-1$,  
         $\norm{r_k} = |\phi_k| = |\phi_{k-1}| |s_k| \ge |\phi_{k-1}| > 0$. 
         Thus $\norm{r_k}$ is monotonically decreasing.
\item At the last iteration $ \ell$,
\begin{enumerate} 
\item If $\rank(L_\ell) = \ell$, then $\norm{r_\ell} = \phi_\ell=0$.
\vspace{1ex}\item If $\rank(L_\ell) = \ell\!-\!1$, then 
$\norm{r_\ell} = |\phi_{\ell-1}| > 0$.
\end{enumerate} 
\end{enumerate}
\end{lemma}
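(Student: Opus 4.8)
The plan is to read everything off the \QR step \eqref{QRfac}--\eqref{min7}, the full-column-rank fact from Section~\ref{sec:qrfac} (and Property~\ref{prop:fullrank}), and Theorem~\ref{thm:rankTk}; the same argument serves both \CSMINRES and \CSMINRESQLP because only the tridiagonal shape of $\underTk$ and the orthonormality of $V_k$, $Q_k$ (and $P_k$) are used. Part~1 is immediate: with $x_0=0$ we get $r_0=b=\beta_1v_1$ and $\norm{v_1}=1$, hence $\norm{r_0}=\beta_1$, which is the defining value $\phi_0$.

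For part~2, I would fix $k<\ell$ and start from \eqref{eqn:rk}: $r_k=V_{k+1}(\beta_1e_1-\underTk y_k)$. Since $V_{k+1}$ and the square orthogonal $Q_k$ preserve the $2$-norm, $\norm{r_k}=\norm{Q_k(\beta_1e_1-\underTk y_k)}=\normm{\smat{t_k-R_ky_k\\ \phi_k}}$ by \eqref{QRfac}. Because $R_k$ is $k\times k$ and nonsingular for $k<\ell$ (Section~\ref{sec:qrfac}), the minimizer $y_k$ of \eqref{eqn:LSsubprob} is the exact solution of $R_ky_k=t_k$, leaving $\norm{r_k}=|\phi_k|$; the same conclusion holds for \CSMINRESQLP since the post-\QLP subproblem \eqref{Lsubproblem} has the identical right-hand side. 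The scalar recurrence $\phi_k=\phi_{k-1}\conj s_k$ is just the last row of the right-hand-side update in \eqref{min7} (equivalently the last entry of \eqref{QRfac2}), so $|\phi_k|=|\phi_{k-1}|\,|s_k|$. Finally, $c_k^2+|s_k|^2=1$ gives $|s_k|\le1$, while for $k<\ell$ we have $\beta_{k+1}>0$ and the plane reflector annihilating $\beta_{k+1}$ has $s_k=0$ only when $\beta_{k+1}=0$; hence $0<|s_k|\le1$. Starting from $\phi_0=\beta_1>0$, induction yields $0<|\phi_k|\le|\phi_{k-1}|$, so $\norm{r_k}$ is positive and monotonically nonincreasing (with equality precisely when $\gamma_k=0$, which cannot recur in two consecutive steps; cf.\ Section~\ref{sec:Lanproperties}).

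For part~3 the key observation is that $L_\ell$ is obtained from $T_\ell$ by the orthogonal factorizations $Q_{\ell-1}$ (the \QR step, the would-be final reflector acting on a zero row and hence trivial) and $P_\ell$ (the \LQ step), so $\rank(L_\ell)=\rank(T_\ell)$. By Theorem~\ref{thm:rankTk}, $\rank(L_\ell)=\ell$ iff $b\in\range(A)$ and $\rank(L_\ell)=\ell-1$ iff $b\notin\range(A)$. In case~(a), $b\in\range(A)$, and Theorems~\ref{theorem-singular-compatible} and~\ref{theorem-MINRES-QLP} give $Ax_\ell=b$, so $r_\ell=0$ and $\phi_\ell=\norm{r_\ell}=0$ (consistently, when $\beta_{\ell+1}=0$ and $\gamma_\ell\ne0$ the final reflector has $s_\ell=0$, forcing $\phi_\ell=\phi_{\ell-1}\conj s_\ell=0$). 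In case~(b), $b\notin\range(A)$, so $R_{\ell\ell}=\gamma_\ell=0$, the subproblems \eqref{eqn:LSsubprob-ell}--\eqref{eqn:LSsubprob-ell-2} are incompatible, and as explained in Section~\ref{sec:term} the last component of $y_\ell$ (equivalently of $u_\ell$, via \eqref{eq:Lut}) is zeroed, so the final iterate and residual revert to the step-$(\ell-1)$ quantities; for \CSMINRESQLP this also follows from the block identity in the proof of Theorem~\ref{theorem-MINRES-QLP}, where nonsingularity of $L_{\ell-1}$ makes $\min\norm{Ax-b}$ equal the frozen value $|\phi_{\ell-1}|$. By part~2, $\norm{r_\ell}=|\phi_{\ell-1}|>0$.

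I expect the only real friction to be the bookkeeping in part~3: checking that the identity $\norm{r_k}=|\phi_k|$ still makes sense at $k=\ell$, where $\beta_{\ell+1}=0$ makes the final left reflector trivial or skipped, and aligning the ``$\phi_\ell$'' that appears in the \CSMINRESQLP termination analysis with the frozen $\phi_{\ell-1}$ coming out of part~2. Parts~1 and~2, and the rank comparison in part~3, are a direct transcription of \eqref{QRfac}--\eqref{min7} plus Theorem~\ref{thm:rankTk}, and the skew symmetric and skew Hermitian cases go through verbatim since nothing beyond the tridiagonal structure and orthogonality is invoked.
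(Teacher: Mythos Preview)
Your proposal is correct and follows essentially the same route as the paper: reduce $\norm{r_k}$ to $|\phi_k|$ via the \QR factorization and orthonormality of $V_{k+1}$ and $Q_k$, read off the recurrence $\phi_k=\phi_{k-1}\conj s_k$ from~\eqref{min7}, and split the terminal case on $\rank(T_\ell)$. The paper's version is a bit terser---it records the explicit formula $r_k=\phi_k V_{k+1}Q_k^*e_{k+1}$ before taking norms, and in part~3 it argues directly that singular $T_\ell$ forces the last row and column of $L_\ell$ to vanish (hence $r_\ell=r_{\ell-1}$) rather than routing through Theorems~\ref{theorem-singular-compatible} and~\ref{theorem-MINRES-QLP}---but the substance is the same.
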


\smallskip
\begin{proof}
\begin{enumerate}
\item Obvious.
\item 
  If $k < \ell$, 
  \red{from}~\eqref{eqn:rk}--\eqref{minresxk} with $R_ky_k=t_k$ we have
\begin{align}
   r_k 
   &= V_{k+1} Q_k^* \pmat{ \bmat{t_k \\ \phi_k} - \bmat{R_k \\ 0}
     y_k} = \phi_k V_{k+1} Q_k^* e_{k+1}.  \label{rk7}
\end{align}
  We have $\norm{r_k} = |\phi_k| = |\phi_{k-1}||s_k| > 0$;
  see~\eqref{QRfac2}--\eqref{eq:normrk}.

\item If $T_\ell$ is nonsingular, $r_\ell = 0$.  Otherwise
  $Q_{\ell-1,\ell}$ has made the last row of $R_\ell$ zero, so the
  last row and column of $L_\ell$ are zero; see \eqref{eq:Lut}.  Thus
  $r_\ell = r_{\ell-1}\ne 0$.
\end{enumerate}
\end{proof}

\subsection{Norm of $A^*r_k$}
 
For incompatible systems, $r_k$ will never be zero.  However, all \LS
solutions satisfy $A^* Ax = A^*b$, so that $A^*r = 0$.  We therefore
need a stopping condition based on the size of $\norm{A^*
  r_k}=\psi_k$.  We present efficient recurrence relations for
$\norm{A^* r_k}$ in the following \red{lemma}. We also show that $A^*r_k$ is
orthogonal to $\mathcal{K}_k(A,\conj{b})$.

\smallskip
\begin{lemma}[$A^*r_k$ and $\psi_k \equiv \norm{A^*r_k}$ for \CSMINRES]
  \label{minreslemma2}
\begin{enumerate}
\vspace{1ex}\item If $k < \ell$, then $\rank(L_k) = k$, $\conj{A}r_k =
\norm{r_k} (\conj{\gamma}_{k+1}\conj{v}_{k+1} +
\delta_{k+2}\conj{v}_{k+2})$ and $\psi_k = \norm{r_k}
\norm{[\gamma_{k+1} \ \; \delta_{k+2}]}$, where $\delta_{k+2}=0$ if $k
= \ell\!-\!1$.
\vspace{1ex}\item At the last iteration  $ \ell$, 
\begin{enumerate}
\vspace{1ex}\item If $\rank(L_\ell) = \ell$, then $Ar_\ell=0$ and
$\psi_\ell = 0$.
\vspace{1ex}\item If $\rank(L_\ell) = \ell\!-\!1$, then
         $Ar_\ell = Ar_{\ell-1} = 0$, and
         $\psi_\ell  = \psi_{\ell-1} = 0$.
\end{enumerate}
\end{enumerate}
\end{lemma}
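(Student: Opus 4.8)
The plan is to exploit the residual formula \eqref{rk7}, namely $r_k = \phi_k V_{k+1} Q_k^* e_{k+1}$ for $k < \ell$, together with the Saunders relation $A\conj{V}_k = V_{k+1}\underTk$ applied one step further. First I would multiply $r_k$ on the left by $A^* = \conj{A}$ (using complex symmetry). Since $\conj{A}$ acts most naturally on $\conj{V}$-vectors, I would write $\conj{A}\,\conj{V}_{k+1} = \conj{V_{k+1}\underTkp}$ — but more directly, observe that $r_k \in \range(V_{k+1})$, say $r_k = V_{k+1}\rho$ for a known vector $\rho = \phi_k Q_k^* e_{k+1}$, and then $\conj{A} r_k = \conj{A}\,\overline{\overline{V}_{k+1}}\,\rho$. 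Applying $\conj{A}\,\overline{V}_{k+1} = \overline{A \overline{V}_{k+1}} = \overline{V_{k+2}\underTkp}$ gives $\conj{A} r_k = \overline{V}_{k+2}\,\overline{\underTkp}\,\overline{\rho}$, so the norm $\psi_k = \|A^*r_k\|$ reduces to $\|\underTkp^*\, \rho\|$ since $\overline{V}_{k+2}$ has orthonormal columns in exact arithmetic. Here I must be slightly careful about which Lanczos step is legitimate: $A\overline{V}_{k+1} = V_{k+2}\underTkp$ holds only if $k+1 < \ell$, i.e. $k \le \ell-2$; the boundary case $k = \ell-1$ needs the truncated relation $A\overline{V}_\ell = V_\ell T_\ell$, which is why $\delta_{k+2}$ is set to zero there.

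The next step is the explicit computation of $\underTkp^* Q_k^* e_{k+1}$. From the QR factorization \eqref{QRfac}, $Q_k \underTk = \smat{R_k \\ 0}$, and from the structure in \eqref{min7}, $Q_k$ acting on the $(k+1)$st and $(k+2)$nd columns of $\underline{T}$ produces the entries $\gamma_{k+1}, \delta_{k+2}$ (and $\epsilon_{k+2}$ further down). Concretely, $Q_k \underTkp = \smat{R_k & u_k \\ 0 & \gamma_{k+1} \\ 0 & \delta_{k+2}}$-type shape — I would read off from \eqref{min7} that the last two rows of $Q_k\underTkp$ below $R_k$ carry exactly $\gamma_{k+1}$ and $\delta_{k+2}$ in the $(k+1)$st column. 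Therefore $\underTkp^* Q_k^* e_{k+1} = (Q_k \underTkp)^* e_{k+1}$ picks out, up to conjugation, the $(k+1)$st-column entries in rows $k+1$ and $k+2$, giving $\conj{\gamma}_{k+1}$ and $\conj{\delta}_{k+2}$ in positions $k+1, k+2$ and zeros elsewhere. Multiplying back by $\overline{V}_{k+2}$ yields $\conj{A}r_k = \phi_k(\conj{\gamma}_{k+1}\conj{v}_{k+1} + \conj{\delta}_{k+2}\conj{v}_{k+2})$; since $\|r_k\| = |\phi_k|$ by Lemma~\ref{minresqlp_r}, this is exactly the claimed formula, and $\psi_k = \|r_k\|\,\|[\gamma_{k+1}\ \delta_{k+2}]\|$ follows from orthonormality of $\conj{v}_{k+1}, \conj{v}_{k+2}$. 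The rank claim $\rank(L_k)=k$ for $k<\ell$ is immediate from Theorem~\ref{thm:rankTk} / Property~\ref{prop:fullrank} combined with \eqref{qlpeqn3a}.

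For part 2 (the last iteration), when $\rank(L_\ell)=\ell$ we have $T_\ell$ nonsingular, so $r_\ell = 0$ by Lemma~\ref{minresqlp_r}, hence $A^*r_\ell=0$ trivially. When $\rank(L_\ell)=\ell-1$, we have $b\notin\range(A)$, $R_{\ell\ell}=\gamma_\ell=0$, and $r_\ell = r_{\ell-1}$; apply part 1 with $k=\ell-1$, where $\delta_{\ell+1}$ is defined to be zero, so $\conj{A}r_{\ell-1} = \|r_{\ell-1}\|\conj{\gamma}_\ell \conj{v}_\ell = 0$ since $\gamma_\ell=0$. Thus $\psi_\ell = \psi_{\ell-1}=0$, which also reproves Theorem~\ref{theorem-singular-incompatible}. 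I would also note in passing the orthogonality claim $A^*r_k \perp \mathcal{K}_k(A,\conj{b})$: since $\mathcal{K}_k(A,\conj b)$ is related to $\range(\overline{V}_k)$ and $A^*r_k$ lies in $\Span\{\conj v_{k+1}, \conj v_{k+2}\}$, this follows from orthonormality of the Saunders vectors. The main obstacle I anticipate is purely bookkeeping: tracking the conjugations carefully through $A\overline{V} = \overline{V}\,\overline{\underline T}$ versus $A\overline{V} = V\underline{T}$, and confirming from \eqref{min7} precisely which reflected entries land in rows $k+1,k+2$ of the $(k+1)$st column — everything else is routine.
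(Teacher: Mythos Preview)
Your approach is exactly the paper's: apply $\conj A$ to $r_k = \phi_k V_{k+1}Q_k^* e_{k+1}$ via the conjugated Saunders relation $\conj A\,V_{k+1} = \conj V_{k+2}\,\conj{\underTkp}$, then read off $e_{k+1}^T Q_k \underTkp^T = [\,0\ \ \gamma_{k+1}\ \ \delta_{k+2}\,]$ from \eqref{min7}; part~2 is handled via Lemma~\ref{minresqlp_r} exactly as you describe. Two bookkeeping slips to clean up (both of the kind you anticipated): the identity $\conj A\,\overline V_{k+1} = \overline{A\overline V_{k+1}}$ is wrong as written---the right-hand side equals $\conj A\,V_{k+1}$, which is what you actually need, so your $\overline\rho$ should be $\rho$---and $Q_k\underTkp$ is dimensionally undefined; you want $Q_k\underTkp^T$ (equivalently, compute $(\conj{\underTkp}\,Q_k^* e_{k+1})^*$).
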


\smallskip
\begin{proof}
Case~2 follows directly from Lemma~\ref{minresqlp_r}. We prove the
first case here.  For $k < \ell$, $R_k$ is nonsingular.
From~\eqref{eqn:rk}--\eqref{minresxk} with $R_ky_k=t_k$ we have
\begin{align}
  \conj{A}r_k &= \phi_k \conj{V}\!_{k+2} \underline{\conj{T}\!_{k+1}} Q_k^* e_{k+1},  
  \text{ by } (\ref{rk7})              \nonumber
\\ Q_k \underTkp\T
       &= Q_k \bmat{T_{k+1} & \beta_{k+2} e_{k+1}}
        = Q_k \bmat{     T_k           & \beta_{k+1} e_k  & 0
                 \\ \beta_{k+1} e_k^T  & \alpha_{k+1}     & \beta_{k+2}},
                                                              \nonumber
\\ e_{k+1}^T Q_k \underTkp\T & =
                     \bmat{0 & \gamma_{k+1} & \delta_{k+2}},    \nonumber
\end{align}
by~\eqref{min7}.
We take
$\delta_{k+2}=0$ if $k = \ell-1$, so
\begin{align*}
   \conj{A}r_k &= \tau_{k+1} \conj{V}\!_{k+2} 
           \bmat{0 & \gamma_{k+1} & \delta_{k+2}}^*
         = \tau_{k+1} \left( \conj{\gamma}_{k+1} \conj{v}_{k+1} + \delta_{k+2} 
           \conj{v}_{k+2}
                 \right),
\\ \psi_k^2 &\equiv \norm{\conj{A}r_k}^2 = \norm{r_k}^2
      \left( [\gamma_{k+1}]^2 + [\delta_{k+2}]^2 \right). 
\end{align*}
\red{The result} follows.
\end{proof}

Typically $\norm{\conj{A}r_k}$ is not monotonic, while clearly
$\norm{r_k}$ is monotonically decreasing.  In the singular system $A
=U \Sigma U\T$, let $U = \bmat{U_1 \!&\! U_2}$, where the singular
vectors $U_1$ correspond to nonzero singular values.  Then $P_A \equiv
U_1 U_1^*$ and $P^{\perp}_A \equiv U_2 U_2^*$ are orthogonal
projectors~\cite{TB} onto the range and nullspace of $A$.  For general
linear \LS problems, Chang et al.\ \cite{CPP09} characterize the
dynamics of $\norm{r_k}$ and $\norm{A^* r_k}$ in three phases defined
in terms of the ratios among $\norm{r_k}$, $\norm{P_A r_k}$\red{,} and
$\norm{P^{\perp}_A r_k}$, and propose two new stopping criteria for
iterative solvers.  The expositions in~\cite{AG08, JT10} show that
these estimates are cheaply computable in \CGLS and
\LSQR~\cite{PS82a,PS82b}. These results are likely applicable to
\CSMINRES.

\subsection{Matrix norms}

From the Lanczos process,
$\norm{A} \ge \norm{V_{k+1}^*A\conj{V}\!_k} = \norm{\underTk}$.
Define
\begin{equation}  \label{normA2}
  \mathcal{A}^{(0)} \equiv 0, \quad \mathcal{A}^{(k)}
    \equiv \max_{j=1,\ldots,k} \left\{ \norm{\underTj e_j} \right\}
    = \max\left\{\mathcal{A}^{(k-1)},\norm{\underTk e_k} \right\}
      \text{ for } k \ge 1.
\end{equation}
Then $\norm{A} \geq \norm{\underTk} \geq \mathcal{A}^{(k)}$.
Clearly, $\mathcal{A}^{(k)}$ is monotonically increasing and is thus
an improving estimate for $\norm{A}$ as $k$ increases.  By the
property of \QLP decomposition in~\eqref{qlpeqn1a} and~\eqref{qlpRightRef},
we could easily extend~\eqref{normA2} to include the largest diagonal
of $L_k$:
\begin{equation}  \label{normA2b}
  \mathcal{A}^{(0)} \equiv 0, \quad
  \mathcal{A}^{(k)} \equiv \max\{\mathcal{A}^{(k-1)},       \,
                                   \norm{ \underline {T_k}e_k}, \,
                                   \gamma_{k-2}^{(6)},          \,
                                   \gamma_{k-1}^{(5)},          \,
                                  |\gamma_k^{(4)}| \} \text { for } k\ge1,
\end{equation}
which uses quantities readily available from \CSMINRES and gives
satisfactory, if not extremely accurate, estimates for the order of
$\norm{A}$.

\subsection{Matrix condition numbers}

We again apply the property of the \QLP decomposition in
\eqref{qlpeqn1a} and~\eqref{qlpRightRef} to estimate
$\kappa(\underTk)$, which is a lower bound for $\kappa(A)$:
\begin{align}
  \gamma_{\min} &\leftarrow \min\{\gamma_1, \gamma_2^{(2)}\}, \quad
  \gamma_{\min}  \leftarrow \min\{\gamma_{\min}, \gamma_{k-2}^{(6)}, \,
                               \gamma_{k-1}^{(5)}, \,  |\gamma_k^{(4)}| \}
                            \text{ for } k \ge 3, \nonumber
\\ \kappa^{(0)} &\equiv 1, \quad
   \kappa^{(k)} \equiv \max\left\{ \kappa^{(k-1)},
                                   \frac{\mathcal{A}^{(k)}}{\gamma_{\min}}
                          \right\}  \text{ for } k \ge 1.
   \label{cond2AQLP}
\end{align}

\subsection{Solution norms} \label{subsectsolnorm}

For \CSMINRESQLP, we derive a recurrence relation for $\norm{x_k}$
whose cost is as low as computing the norm of a $3$- or \red{$4$-}vector. 
This recurrence relation is not applicable to \CSMINRES
standalone.

Since $\norm{x_k} = \norm{\conj{V}\!_k P_k u_k} = \norm{u_k}$, we can
estimate $\norm{x_k}$ by computing $\chi_k \equiv \norm{u_k}$.
However, the last two elements of $u_k$ change in $u_{k+1}$ (and a new
element $\mu_{k+1}$ is added).  We therefore maintain $\chi_{k-2}$ by
updating it and then using it according to
\[
   \chi_{k-2}^{(2)} =
   \norm{[\chi_{k-3}^{(2)} \ \; \mu_{k-2}^{(3)}]},
   \quad
   \chi_{k}
   = \norm{[\chi_{k-2}^{(2)}  \ \; \mu_{k-1}^{(2)} \ \; \mu_k]}\red{;}
\]
cf.~\eqref{qlpeqnsol1} and~\eqref{qlpeqnsol2}. Thus $\chi_{k-2}^{(2)}$ increases monotonically but we cannot
guarantee that $\norm{x_k}$ and its recurred estimate $\chi_k$ are
increasing, and indeed they are not in some examples. But the trend
for $\chi_k$ is generally increasing, and $\chi_{k}^{(2)}$ is
theoretically a better estimate than $\chi_{k}$ for $\norm{x_k}$.
In \LS problems, when $\gamma_k^{(4)}$ is small enough in
magnitude, it also means $\norm{x_k} = \norm{y_k} = \norm{u_k}$ is
large---and when this quantity is larger than $\mathit{maxxnorm}$, it
usually means that we should do only a partial update of $x_k =
x_{k-2}^{(2)} + w_{k-1}^{(3)} \mu_{k-1}^{(2)}$. \red{If it} still exceeds
$\mathit{maxxnorm}$ in length, then we  do no update, \red{namely},
$x_{k} = x_{k-2}^{(2)}$.

\subsection{Projection norms}

In applications requiring nullvectors~\cite{C06}, $Ax_k$  is
useful.  Other times, the projection of the right-hand side $b$ onto
$\mathcal{K}_k(A,\conj{b})$ is required~\cite{S97}.  For the
recurrence relations of $Ax_k$ and its norm, we have
\begin{align*}
   Ax_k &= A\conj{V}\!_k y_k = V_{k+1} \underTk y_k
                    = V_{k+1} Q_k^*  \bmat{R_k \\ 0} y_k
                    = V_{k+1} Q_k^*  \bmat{t_k \\ 0},
\\ \omega_k^2 &\equiv \norm{Ax_k}^2 
    = \norm{t_k}^2
    = \norm{t_{k-1}}^2 +  (\tau_k^{(2)} )^2
    = \omega_{k-1}^2   +  (\tau_k^{(2)} )^2 
    = \norm{\smat{\omega_{k-1} & \tau_k^{(2)}}}^2.
\end{align*}
Thus $\{\omega_k\}$ is monotonic.

\section{Comparison of Lanczos-based solvers}  \label{sec:compare}
 
We compare our new solvers with \CG, \SYMMLQ, \MINRES, and \MINRESQLP
in Tables~\ref{tableQLPsubproblems}--\ref{tableQLPbasis} in terms of
subproblem definitions, basis, solution estimates, flops, and memory.
A careful implementation of \SYMMLQ computes $x_k$ in
$\range(V_{k+1})$; see~\cite[Section~2.2.2]{C06} for a proof.  All
solvers need storage for $v_k$, $v_{k+1}$, $x_k$, and a product $p_k =
Av_k$ \text{ or } $A\conj{v}_k$ each iteration.  Some additional
work-vectors are needed for each method (e.g., $d_{k-1}$ and $d_k$ for
\MINRES or \CSMINRES, giving 7 work-vectors in total). \red{We
note} that even for Hermitian and skew Hermitian problems $Ax=b$,
the subproblems of \CG, \SYMMLQ, \MINRES, and \MINRESQLP 
are real.
\begin{table}[ht!]    
  \caption{Subproblems defining $x_k$ for  CG, SYMMLQ, 
    MINRES, MINRES-QLP, CS-MINRES, CS-MINRES-QLP, \red{SS-MINRES, SH-MINRES-QLP,}  SH-MINRES, and
    SH-MINRES-QLP.}
  \label{tableQLPsubproblems}
  \centering
  \begin{tabular}{|l|l|l|l|}
   \hline
     Method  &  Subproblem  &  Factorization  &  Estimate of $x_k$
  \\ \hline \tablestrut
     cgLanczos            & $T_k y_k = \beta_1 e_1$ & Cholesky: & $x_k = V_k y_k$
  \\ \cite{HS52,PS75,SOL} &                         & $T_k = L_k D_k L_k\T$
                                                    & $\quad \in \red{\mathcal{K}_k(A,b)}$
  \\[0pt] \hline \tablestrut
     \SYMMLQ  & $y_{k+1} = \arg\min_{y \in \mathbb{R}^{k+1}} \norm{y}$
                        & \LQ:                     & $x_k = V_{k+1} y_{k+1}$
  \\ \cite{PS75,C06}      & \quad s.t.~$\underline{T_k\T} y = \beta_1 e_1$
                          & $\underTk\T  Q_k\T  = \bmat{L_k & \!\!\!0}$
                          & \quad $\in \red{\mathcal{K}_{k+1}(A,b)}$
  \\[4pt] \hline \tablestrut
    \MINRES\cite{PS75},\cite{C06}-\cite{CS12b}      & ${\displaystyle y_k =
                          \arg\min_{y\in\mathbb{R}^k}
                          \norm{\underTk y - \beta_1 e_1}}$
                                 & \QR:     & $x_k = V_k y_k$
  \\[-14pt]\red{\SSMINRES}  &      & $Q_k\underTk = \raisebox{4pt}{$\bmat{R_k \\ 0}$}$
                                           & $\quad\in\red{\mathcal{K}_k(A,b)}$
  \\[2pt] \hline \tablestrut
     \MINRESQLP\!\! \cite{C06}-\cite{CS12b}     & $y_k = \arg\min_{y\in\mathbb{R}^k} \norm{y}$
                                 & {\QLP:}  & $x_k =V_k y_k$
  \\[-10pt] \red{\SSMINRESQLP}  & s.t.~$y \in \arg\min \norm{\underTk y - \beta_1 e_1}$\!\!
                                 & $Q_k\underTk P_k = \raisebox{4pt}{$\bmat{L_k\\0}$}$
                                         & $\quad \in \red{\mathcal{K}_k(A,b)}$                                     
  \\[2pt] \hline \tablestrut
    \SHMINRES             & ${\displaystyle y_k =
                          \arg\min_{y\in\mathbb{R}^k}
                          \norm{\underTk y - \beta_1 e_1}}$
                                 & \QR:     & $x_k =  \red{V}\!_k y_k$
   \\[-10pt]                &      & $Q_k\underTk = \raisebox{4pt}{$\bmat{R_k \\ 0}$}$
                                           & $\quad\in\red{\mathcal{K}_k(iA,ib)}$     
  \\[2pt] \hline \tablestrut
     \SHMINRESQLP\!\!       & $y_k = \arg\min_{y\in\mathbb{R}^k} \norm{y}$
                                 & {\QLP:}  & $x_k = \red{V}\!_k y_k$
  \\[-4pt]                  & s.t.~$y \in \arg\min \norm{\underTk y - \beta_1 e_1}$\!\!
                                 & $Q_k\underTk P_k = \raisebox{4pt}{$\bmat{L_k\\0}$}$
                                         & $\quad \in \red{\mathcal{K}_k(iA,ib)}$                                                                                
  \\[2pt] \hline \tablestrut
    \CSMINRES             & ${\displaystyle y_k =
                          \arg\min_{y\in\mathbb{C}^k}
                          \norm{\underTk y - \beta_1 e_1}}$
                                 & \QR:     & $x_k = \conj{V}\!_k y_k$
   \\[-10pt]              &      & $Q_k\underTk = \raisebox{4pt}{$\bmat{R_k \\ 0}$}$
                                           & $\quad\in\red{\mathcal{S}_k(A,b)}$                        
  \\[2pt] \hline \tablestrut
     \CSMINRESQLP\!\!       & $y_k = \arg\min_{y\in\mathbb{C}^k} \norm{y}$
                                 & {\QLP:}  & $x_k =\conj{V}\!_k y_k$
  \\[-4pt]                  & s.t.~$y \in \arg\min \norm{\underTk y - \beta_1 e_1}$\!\!
                                 & $Q_k\underTk P_k = \raisebox{4pt}{$\bmat{L_k\\0}$}$
                                         & $\quad \in\red{\mathcal{S}_k(A,b)}$
  %
 
  \\[0pt] \hline
  \end{tabular}

\vspace*{0.4in}

  \caption{Bases, subproblem solutions, storage, and work for each method.}
  \label{tableQLPbasis}
  \centering
  \begin{tabular}{|l|l|l|l|c|}
     \hline
     Method & New \red{Basis} & $ \quad\quad\quad z_k, t_k, u_k $ & $x_k$ \red{Estimate}
                        & \!\!vecs \ flops\!\!
  \\ \hline \tablestrut
     cgLanczos & $W_k \equiv V_k L_k^{-T}$ & $ L_k D_k z_k =\beta_1 e_1$
               & $x_k \!=\! W_k z_k$
               & 5 \ \ $\ 8n$
  \\ \hline  \rule[0ex]{0ex}{4ex}%
     \SYMMLQ   & $W_k\equiv V_{k+1} Q_k^T \bmat{I_k \\ 0}$\!\! & $L_k z_k \!=\!\beta_1 e_1$
               & $x_k \!=\! W_k z_k$
               & 6 \ \ $\ 9n$
  \\[2pt] \hline \tablestrut
     \MINRES   & $D_k \equiv V_k R_k^{-1}$
               & $\ \ \ \ t_k \!=\!\beta_1 \mystrut\bmat{I_k & \!\!\!0} Q_k e_1$\!\!
               & $x_k \!=\! D_k t_k $
               & 7 \ \ $\ 9n$
   \\[-6pt]\red{\SSMINRES} & & & &
   \\[-2pt]\red{\SHMINRES} & & & &
  \\ \hline \tablestrut
     \MINRESQLP\!\! & $W_k \equiv V_k P_k$
               & $L_k u_k \!=\!\beta_1 \mystrut\bmat{I_k & \!\!\!0} Q_k e_1$\!\!
               & $x_k \!=\! W_k u_k$
               & 8 \ \ $14n$
  \\[-6pt]\red{\SSMINRESQLP} & & & &
  \\[-2pt]\red{\SHMINRESQLP} & & & &
  \\[2pt] \hline \tablestrut
     \CSMINRES & $D_k \equiv \conj{V}\!_k R_k^{-1}$
               & $\ \ \ \ t_k \!=\!\beta_1 \mystrut\bmat{I_k & \!\!\!0} Q_k e_1$\!\!
               & $x_k \!=\! D_k t_k $
               & 7 \ \ $\ 9n$
  \\[2pt] \hline \tablestrut
     \CSMINRESQLP\!\! & $W_k \equiv \conj{V}\!_k P_k$
               & $L_k u_k \!=\!\beta_1 \mystrut\bmat{I_k & \!\!\!0} Q_k e_1$\!\!
               & $x_k \!=\! W_k u_k$
               & 8 \ \ $14n$
  \\[2pt] \hline
\end{tabular}
\end{table}

\pagebreak
 
\frenchspacing

\bibliographystyle{abbrv}
\bibliography{../../../../LaTeX_files/CSC_refs6}

\vfill

{\small The submitted manuscript has been created by the University of
  Chicago as Operator of Argonne National Laboratory (``Argonne'')
  under Contract DE-AC02-06CH11357 with the U.S.\ Department of Energy.
  The U.S.\ Government retains for itself, and others acting on its
  behalf, a paid-up, nonexclusive, irrevocable worldwide license in
  said article to reproduce, prepare derivative works, distribute
  copies to the public, and perform publicly and display publicly, by
  or on behalf of the Government.}

\end{document}